\documentclass[11pt]{article}

\usepackage[T1]{fontenc}
\usepackage{lmodern}
\usepackage{microtype}
\usepackage[margin=1.02in]{geometry}
\usepackage{amsmath,amssymb,amsthm,mathtools}
\usepackage{bm}
\usepackage{booktabs}
\usepackage{authblk}
\usepackage{enumitem}
\usepackage[hidelinks]{hyperref}
\usepackage[nameinlink,noabbrev]{cleveref}

\usepackage[utf8]{inputenc}

\usepackage{tikz}
\usetikzlibrary{arrows.meta,calc,positioning,fit,backgrounds}

\usepackage{comment}
\usepackage{ulem}

\usepackage[noadjust]{cite}
\usepackage[
  convert-footnotes = true,
  placement = mixed
]{notes2bib}

\newtheorem{theorem}{Theorem}[section]
\newtheorem{lemma}[theorem]{Lemma}
\newtheorem{proposition}[theorem]{Proposition}
\newtheorem{corollary}[theorem]{Corollary}
\newtheorem{remark}[theorem]{Remark}
\newtheorem{definition}[theorem]{Definition}

\newcommand{\Cut}{\operatorname{c}}
\newcommand{\mc}{\operatorname{MC}}
\newcommand{\Eval}{\operatorname{Eval}}
\newcommand{\M}{\#\mc}
\newcommand{\FP}{\mathsf{FP}}
\newcommand{\ii}{\mathrm{i}}
\newcommand{\one}{\mathbf{1}}

\newcommand{\set}[1]{\left\{#1\right\}}

\newcommand{\sh}[1]{{#1}}

\title{Evaluating QAOA expectation values can be as hard as counting optimal solutions}
\author{Stuart Hadfield%
\thanks{\href{mailto:stuart.hadfield@nasa.gov}{stuart.hadfield@nasa.gov} \\ \sh{This document was created with assistance from  ChatGPT-5.6-Sol AI tools as part of the drafting of Figure~\ref{fig:triangle-gadget} and for verification of the proofs of this manuscript. All content has been reviewed and edited by the author.} }}
\affil{Quantum Artificial Intelligence Laboratory (QuAIL), NASA Ames Research Center, Moffett Field, CA 94035, USA}
\affil{USRA Research Institute for Advanced Computer Science, Mountain View, CA 94043, USA}

\date{\today}

\begin{document}
\maketitle

\begin{abstract}

Evaluating expectation values is a critical task for variational quantum eigensolvers, and for parameterized quantum circuits 
and other quantum algorithms more generally. 
We consider the well-studied case of the Quantum Approximate Optimization Algorithm (QAOA) for the MaxCut problem.
Recent work of Wang et al. [arXiv:2511.20212] showed this task to be NP-hard 
in general %
for any QAOA %
depth $p\geq 2$, %
complementing past results showing %
efficiently computable formulas for %
$p=1$ with arbitrary problem graphs. 
We sharpen this dichotomy showing that for
$p\geq2$ exact or exponentially precise 
cost expectation value 
evaluation is $\#$P-hard under deterministic polynomial-time Turing
reductions. %
Hardness %
at $p\geq 2$ is shown to remain even for evaluating 
single %
pairwise correlators %
$\langle Z\otimes Z\rangle$, 
as well as for
highly 
restricted sets of algorithm parameters. 
Our proof refines the NP-hardness construction of Wang et al. that recovers the maximum cut value from the largest exponent of a QAOA Laurent polynomial, utilizing a %
distinct %
and simpler construction that %
extracts a value proportional to the total number of maximum cuts, in addition to the optimal cut value. 
Thus we show that 
the QAOA expectation value hardness transition from
$p=1$ to $p=2$ is not only from tractability to %
optimization %
hardness, but to that of counting optimal solutions. As an application we show our results imply %
analogous hardness results for computing gradients and Hessians of QAOA circuits. 
\end{abstract}

\section{Introduction}

Quantum computing has %
attracted much interest toward enabling %
novel approaches to NP-hard combinatorial optimization problems, in both exact and approximate optimization settings~\cite{abbas2023quantum}. 
At the same time, rigorous results characterizing their power or limitations remain relatively sparse, %
impeding our assessment of their true 
utility or potential for quantum advantage here. %
A popular paradigm is the 
Quantum Approximate Optimization Algorithm~(QAOA)~\cite{Farhi2014,hadfield2019quantum}, in particular applied to the prototypical NP-hard MaxCut problem. 
For analyzing quantum algorithms such as QAOA it is common to bound its expected approximation ratio in terms of the expectation value of a quantum observable encoding the cost function. This corresponds to the expected cost value of samples obtained from repeated preparation and measurement of a fixed QAOA state. 
For QAOA at its lowest depth (number of layers) $p=1$ for MaxCut, exact analytic formulas for the expected cost value were obtained in \cite{Wang2018}. On the other hand, recent work of \cite{WangEtAl2025} has shown that computing the same quantities for $p\geq 2$ is 
itself in general NP-hard 
(i.e., can be as hard as the underlying problem itself). Our main contribution is to sharpen this dichotomy to show that the $p\geq 2$ case is in fact $\#$P-hard; i.e., the latter case is as hard as counting the number of optimal solutions of a constraint satisfaction problem, not just finding one, which is generally believed to be a harder %
task~\cite{Valiant1979,Krentel1988}.

\paragraph{QAOA for MaxCut}
Consider the MaxCut problem on a graph $G=(V,E)$ on $n=|V|$ vertices.   
We identify cuts as spin (bit) assignments $x\in\{\pm1\}^{n}$ %
with cut value 
\begin{equation}
 \Cut_G(x)=\sum_{\{u,v\}\in E}\frac{1-x_ux_v}{2},
 \quad
 \mc(G)=\max_x\Cut_G(x),
 \quad
 \M(G)=\set{x:\Cut_G(x)=\mc(G)}.
 \label{eq:cut-definitions}
\end{equation} 
Finding a maximum cut %
is a canonical NP-hard optimization problem.  
Counting maximum cuts exactly, i.e. determining $|\M(G)|$, is $\#$P-hard~\cite[Lemma~13]{JerrumSinclair1993} (see also~\cite{provan1983complexity,hermann2009complexity}).

For QAOA applied to MaxCut we define the standard cost and mixer Hamiltonians from the \sh{single-qubit} Pauli matrices $X=(\begin{smallmatrix} 0 & 1 \\ 1 & 0 \end{smallmatrix} )$ and
 $Z=(\begin{smallmatrix} 1 & 0 \\ 0 & -1 \end{smallmatrix} )$
as 
\begin{equation}
 C_G=\sum_{\{u,v\}\in E}\frac{I-Z_uZ_v}{2},
 \qquad
 B=\sum_{v\in V}X_v.
 \label{eq:hamiltonians}
\end{equation}
\sh{Here $C_G$ satisfies $C_G\lvert x\rangle=c_G(x)\lvert x\rangle$ for every computational basis state $\lvert x\rangle$. } %
At depth %
parameter~$p$, standard QAOA for MaxCut prepares the quantum state\footnote{\sh{Throughout, the QAOA depth parameter $p$ denotes the total number of
alternating cost and mixer rounds, rather than the physical gate
depth of an implementation.  Standard implementations of
$e^{-i\gamma C_G}$ %
require at least $\Delta(G)$ two-qubit %
gate depth. The hard instances we construct below 
have $\Delta(G)=N-1$, so their physical
two-qubit gate depth grows with the instance size,  even when
$p=2$.}} 
\begin{equation}
\lvert\psi_p(\boldsymbol\gamma,\boldsymbol\beta)\rangle
  =(e^{-\ii\beta_p B}e^{-\ii\gamma_p C_G})\dots (e^{-\ii\beta_2 B}e^{-\ii\gamma_2 C_G})( e^{-\ii\beta_1 B}e^{-\ii\gamma_1 C_G})
 \lvert+\rangle^{\otimes n},
 \label{eq:qaoa-state}
\end{equation}
\sh{for easy-to-prepare initial product state $\lvert+\rangle^{\otimes n}=(\lvert0\rangle+\lvert1\rangle)^{\otimes n}/\sqrt{2^n}$, }
while seeking $2p$ parameters (angles) 
$\gamma_1,\dots,\gamma_p,\beta_1,\dots,\beta_p$ optimizing the cost expectation value 
\begin{equation}
 F_p(G;\boldsymbol\gamma,\boldsymbol\beta)
 =\langle\psi_p\rvert C_G\lvert\psi_p\rangle.
 \label{eq:qaoa-objective}
\end{equation}
Given $F_p(G;\boldsymbol\gamma,\boldsymbol\beta)$ one can then obtain bounds on the expected approximation ratio obtained by %
repeatedly preparing and measuring the QAOA state 
with the same parameters on a quantum computer~\cite{Farhi2014,Wang2018,hadfield2018quantum}. \sh{Evaluations of $F_p(G;\boldsymbol\gamma,\boldsymbol\beta)$ 
are also important for other algorithm aspects including 
parameter optimization.}

The complexity of evaluating \eqref{eq:qaoa-objective} is distinct from the complexity of
sampling from the full QAOA output distribution~\cite{farhi2016quantum}.  At $p=1$, the expectation value has an explicit
formula on general graphs~\cite{Farhi2014,Wang2018} (see Appendix~\ref{sec:p1}.).  By contrast, Wang, Chen, Sun,
and Zhang recently showed~\cite{WangEtAl2025} that exact evaluation is NP-hard for every fixed $p\geq2$. %
Their hardness proof writes the cost expectation as a Laurent polynomial in a %
phase
variable and recovers $\mc(G)$ from its largest nonzero exponent when evaluated for $C_{G'}$ on a suitably transformed graph $G'=G'(G)$.

The present work upgrades this mechanism for $p\geq 2$ from optimization to counting hardness.  
We construct instances for
which %
the 
extremal %
Laurent polynomial coefficient can be evaluated in closed form and is shown to be proportional to the full
maximum-cut degeneracy $|\M(G)|$ of the input problem graph.  
The required coefficient uniformity is enforced by %
complete bipartite subgraph  
$K_{L,L}$ variable gadgets with suitably chosen $L=O(n)$,
along with densely connected anchor variables. 
See Figure~\ref{fig:triangle-gadget} below for an illustrative example of our construction. 
Our main theorem uses polynomially many expectation value queries at different phase values to interpolate the required Laurent coefficient for $p=2$.  
Hence we show Turing reducibility between these problems, written informally as 
\begin{eqnarray*}
    |\M| \; \leq_T\; \Eval^{\mathrm{MC}}_{p=2}.
\end{eqnarray*}
The general $p$ result then follows setting all angles in the additional $p-2$ layers to zero. 
This yields $\#$P-hardness for computing QAOA expectation values rather than NP-hardness as in prior work. %
We further show that hardness remains when restricted to computing a single-edge
correlator $\langle Z_rZ_s\rangle$. %

\subsection{Main result: Evaluating QAOA $p=2$ expectation values is \#P-hard}

\begin{definition}[QAOA expectation value evaluation problem]
For fixed \(p \in \mathbb{N}\), let $\Eval_p^{\mathrm{MC}}$ be the function problem that maps a  graph $G$ and %
parameters $\boldsymbol\gamma,\boldsymbol\beta$ to the exact algebraic number $$F_p(G;\boldsymbol\gamma,\boldsymbol\beta).$$

The additive-error version %
additionally receives %
$b\in\mathbb{N}$ in %
\sh{unary}\footnote{Here %
unary encoding of $b$ ensures that the requested number of accuracy bits
is part of the input length.  In particular, choosing \(b=O(N)\) permits
exponentially small error while preserving a polynomial-time reduction.}
and returns a rational number
$\widetilde F$ satisfying
$$\bigl|\widetilde F-
F_p(G;\boldsymbol{\gamma},\boldsymbol{\beta})\bigr|
\leq 2^{-b}.$$ 
\end{definition}

For a designated edge \(e=\{u,v\}\in E(G)\), we additionally define
\[
  \mathrm{Eval}^{\mathrm{uv}}_p
  (G;\boldsymbol{\gamma},\boldsymbol{\beta})
  :=
  \langle\psi_p|Z_uZ_v|\psi_p\rangle .
\]

We will primarily be concerned with deterministic polynomial-time Turing reductions, i.e., what we can solve efficiently given a black box computing $\Eval_p^{\mathrm{MC}}$. 
The output of an exact evaluation oracle is understood in a standard basis of the compositum
of the input number fields.  The particular representation is inessential for the Turing
reductions below because every query uses a number field of polynomial degree and the inverse
Fourier transform is carried out exactly within that field.
QAOA angles are specified exactly by algebraic values of their sines and cosines; the reduction
uses only roots of unity and fixed quadratic algebraic numbers.  Hence the oracle output is %
assumed to be a standard exact representation of elements of the %
 generated number field.\footnote{\sh{ %
For completeness, the oracle calls used in the reduction of Thm.~\ref{thm:main} may be
formalized as follows.  Set $Q=4m+1$ and
$\zeta_Q=e^{2\pi i/Q}$.  The phase $\phi_j=2\pi j/Q$ is encoded by
the integer pair $(Q,j)$, with the convention
$e^{i\phi_j}=\zeta_Q^j$, while the fixed mixer angles are encoded by
$(\cos\beta_1,\sin\beta_1)
   =(\tfrac2{\sqrt{5}},\tfrac1{\sqrt{5}})$ and $(\cos\beta_2,\sin\beta_2)
   =(\tfrac{1}{\sqrt{2}},\tfrac{1}{\sqrt{2}})$.
The oracle output is represented as an element of
$\mathbb{K}_Q
   =\mathbb{Q}(\zeta_Q,i,\sqrt{2},\sqrt{5})$
under the canonical embedding
$\zeta_Q\mapsto e^{2\pi i/Q}$, $i\mapsto\sqrt{-1}$, and with the
positive choices of $\sqrt{2}$ and $\sqrt{5}$.  Since
$[\mathbb{K}_Q:\mathbb{Q}]\leq 8\varphi(Q)$ and $Q$ is polynomial
in the instance size, exact arithmetic, equality testing, and the
inverse discrete Fourier transform 
in the proof of Thm.~\ref{thm:main} 
can all be performed in polynomial time
and with polynomially scaling number of bits~\cite{lenstra1982factoring,basu2006algorithms,cohen2013course}.  Any standard exact
number-field representation supporting these operations is
sufficient.} 
}

\begin{theorem}%
[QAOA expectation complexity dichotomy]\label{thm:main}
For unweighted simple graphs,
\begin{eqnarray}
 &&\Eval^{\mathrm{MC}}_1\in\FP,\\
 &&\Eval^{\mathrm{MC}}_p\text{ is }\#\text{P-hard for every fixed }p\geq2.
 \label{eq:threshold}
\end{eqnarray}
Moreover, for every fixed $p\geq2$,
it suffices to restrict evaluations to QAOA angles
\begin{equation}
 \gamma_1=\gamma_2, %
 \qquad
 \beta_1=\arctan \frac12,
 \qquad 
 \beta_2=\frac{\pi}{4},
 \qquad
 \gamma_j=\beta_j=0\quad(3\leq j\leq p).
 \label{eq:base-restrictions}
\end{equation}
\end{theorem}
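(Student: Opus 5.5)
The $p=1$ membership in $\FP$ follows from the closed-form formula of \cite{Wang2018} recalled in Appendix~\ref{sec:p1}. For each edge $\{u,v\}$, $\langle Z_uZ_v\rangle$ at $p=1$ is a finite expression in $\sin,\cos$ of $\beta_1$ and $\gamma_1$, in the degrees $d_u,d_v$, and in the number of triangles containing the edge. It can be evaluated exactly in the number field generated by the inputs using polynomially many arithmetic operations. Summing over the $|E|$ edges gives $F_1$. The lifting from $p=2$ to general $p$ is immediate: setting $\gamma_j=\beta_j=0$ for $j\ge3$ makes the extra layers the identity, so $F_p=F_2$ at these angles. The real work is therefore \#P-hardness of $\Eval_2^{\mathrm{MC}}$ at $\gamma_1=\gamma_2=\gamma$, $\beta_1=\arctan\frac12$, $\beta_2=\pi/4$.

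For the $p=2$ case, set $z=e^{\ii\gamma}$ and expand the state in the computational basis. Since $C_G$ is diagonal with integer eigenvalues in $\{0,\dots,m\}$, $F_2$ is a Laurent polynomial $P(z)=\sum_{k=-2m}^{2m}a_kz^k$. Its coefficients are polynomial in $\cos\beta_i,\sin\beta_i$, which at the chosen angles are the fixed algebraic numbers $2/\sqrt5,1/\sqrt5,1/\sqrt2$. With $Q=4m+1$, querying the oracle at $\gamma=2\pi j/Q$ for $j=0,\dots,Q-1$ and applying the exact inverse DFT in $\mathbb{K}_Q$ recovers every $a_k$. The plan is to show that, for a suitably transformed graph $G'=G'(G)$, the extremal coefficient is an explicit nonzero constant times $|\M(G)|$, with its exponent determined by $\mc(G)$.

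The construction replaces each vertex $v$ of $G$ by a $K_{L,L}$ gadget with $L=O(n)$. Each original edge becomes dense connections between gadgets, and densely connected anchor vertices are added so that the cut value of $G'$ is dominated by terms that force each gadget into its canonical bipartition, with two orientations encoding $x_v$. Expanding $\langle\psi_2|C|\psi_2\rangle$ as a sum over basis-state quadruples $x^{(1)},x^{(2)},x^{(3)},x^{(4)}$ gives phases $z^{c(x^{(1)})-c(x^{(4)})+c(x^{(2)})-c(x^{(3)})}$, or similar, times mixer amplitudes that factor as products of $\cos\beta$ and $\sin\beta$ powers over the positions where the strings differ. The largest exponent is attained only when the strings hit maximum cuts and near-antipodal configurations. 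Here $\beta_2=\pi/4$ makes the second-layer mixer uniform in magnitude, and $\tan\beta_1=1/2$ fixes the weighting. I would show that the extremal terms are indexed exactly by $\M(G)$, with each contributing the same algebraic weight $w$, which depends only on $n,L,m$ and is nonzero. Dividing $a_{k_{\max}}$ by $w$ then yields $|\M(G)|$. Finding the maximal nonzero index $k$ also gives $\mc(G)$, matching \cite{WangEtAl2025}.

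The main obstacle is this coefficient-uniformity computation. Sign and phase cancellations among the many quadruples attaining the maximal exponent could make the contribution vary between optimal cuts, or vanish. The gadget size $L$ and the anchors must ensure that non-canonical gadget configurations fall strictly below the maximal exponent, and that the residual sum over each optimum has a closed form independent of the specific cut. My first attempt would be to compute the contribution per gadget as a product, using the symmetry of $K_{L,L}$ under swapping sides, so that each optimal $x$ yields an identical factor $w$. I would then verify $w\neq0$ at $\beta_1=\arctan\frac12$. The extra phase and sign factors coming from the $\ii$'s in $e^{-\ii\beta X}$ would be tracked by working with the $\langle Z_rZ_s\rangle$ correlator on an anchor edge. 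This also gives the single-correlator hardness.
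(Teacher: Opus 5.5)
\textbf{There is a genuine gap.} Your outer scaffolding matches the paper's: the $p=1$ closed form, padding with identity layers for $p>2$, the Laurent expansion in $z=e^{\ii\gamma}$ with exponents in $[-2m,2m]$, and the exact inverse DFT at the $Q=4m+1$ roots of unity. But the step that carries all the content is only announced. That step is: which paths reach the top exponent, why their total weight is a fixed multiple of the count, and why that multiple is nonzero. The mechanism you sketch for it is also not the one that works.

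\textbf{The role of $\beta_2=\pi/4$.} It does not serve to make the last mixer ``uniform in magnitude''. It conjugates the measured edge term into a two-spin flip, $e^{\ii\pi B/4}Z_uZ_ve^{-\ii\pi B/4}=Y_uY_v$. This ties the two middle strings of the path sum by $\bar y=T_ey$. Every exponent is then at most $\mc(G)+R_e$, where $R_e=\max_y[\Cut_G(y)-\Cut_G(T_ey)]=d_u+d_v-2$.

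\textbf{The extremal paths.} Taking the anchors to be two \emph{adjacent universal} vertices makes $e_\star=\{r,s\}$ the unique edge with $R_e=2N-4$. Equality then forces $x\in\M(G)$ and $t=\pm\one$, the only zero cuts of a connected graph. It also forces $y=q_\sigma$, with anchors equal to $\sigma$ and all of $W$ equal to $-\sigma$, so that $T_{e_\star}q_\sigma=-\sigma\one$. The extremal strings are therefore a maximum cut, a constant string, and $q_\sigma$ with its flip, not ``near-antipodal configurations''.

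\textbf{Uniformity.} Since only $x$ varies, no per-gadget factorization is needed. The weight depends on $x$ only through $d_{\mathrm{Hamm}}(x,q_\sigma)$. This equals $N/2$ for every maximum cut, because every maximum cut orients each $K_{L,L}$ properly (hence is balanced on $W$) and separates $r$ and $s$.

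\textbf{The construction.} That rigidity requires the source edges to be encoded \emph{sparsely}. The paper uses perfect matchings $\{a_{u,j},a_{w,j}\}$, so each gadget vertex has at most $\Delta+2$ external neighbours and $L=2(\Delta+2)+1$ suffices. ``Dense connections between gadgets'' would break the rigidity argument. The anchors also do not enforce the bipartition; their job is to supply the unique maximal boundary increment.

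\textbf{Nonvanishing.} The resulting coefficient is $-2^{1-N}|\M(G)|\,c^{N/2}(\ii s)^{N/2}[c^N+(-\ii s)^N]$, with $c=\cos\beta_1$ and $s=\sin\beta_1$. The choice $\tan\beta_1=\tfrac12$ makes $|c|\neq|s|$, so the bracket cannot cancel; at $\beta_1=\pi/4$ it vanishes whenever $N/2$ is odd. None of these ingredients appears in your proposal, which explicitly leaves the uniformity computation as the open obstacle.

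\textbf{From the correlator to the global objective.} You plan to do the bookkeeping on $\langle Z_rZ_s\rangle$, but the theorem concerns $\Eval^{\mathrm{MC}}_p$. To transfer the result, you must show that in $F_2=\tfrac{|E|}{2}-\tfrac12\sum_e H_e$ no other edge correlator reaches $z^{D_\star}$, where $D_\star=\mc(G)+2N-4$. This holds because every non-anchor vertex has degree less than $N-1$, so $R_e<R_{e_\star}$ for every $e\neq e_\star$. The top coefficient of $F_2$ is then exactly $-a_{D_\star}/2$. Without this uniqueness, other edges could add to or cancel the extremal coefficient, and the correlator computation alone would not give hardness of the global objective.
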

\begin{corollary}
    The same conclusions of the theorem hold if the oracle %
evaluates only one designated two-qubit correlator
$\langle Z_rZ_s\rangle$ rather than the global cost expectation value $\langle C_G \rangle $.
\end{corollary}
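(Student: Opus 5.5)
The statement to prove is the Corollary: the oracle returns only $\langle Z_rZ_s\rangle$ for a designated edge, and we must recover everything the proof of Thm.~\ref{thm:main} obtained from $\langle C_G\rangle$. The plan is to reduce global cost evaluation to single-edge correlator evaluation at the \emph{same} angles. Write
\[
F_p(G;\boldsymbol\gamma,\boldsymbol\beta)=\frac{|E|}{2}-\frac12\sum_{\{u,v\}\in E}\langle\psi_p\rvert Z_uZ_v\lvert\psi_p\rangle .
\]
Call the oracle $\mathrm{Eval}^{uv}_p(G;\boldsymbol\gamma,\boldsymbol\beta)$ once per edge $\{u,v\}$, each time designating that edge. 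This uses $|E|$ queries at exactly the angles of \eqref{eq:base-restrictions}. Assembling them by the displayed identity gives each value $F_p$ needed in the proof of Thm.~\ref{thm:main}.

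The exact case is immediate. The correlators are elements of the same number field $\mathbb{K}_Q$, and the assembly is a polynomial-size sum with rational coefficients. The rest of the reduction (interpolating the extremal Laurent coefficient by inverse DFT over $Q=4m+1$ phases and extracting $|\M(G)|$) is unchanged. The restrictions \eqref{eq:base-restrictions} carry over verbatim, since the instance graphs and angles are the same. For $p>2$ the extra layers with zero angles act as the identity, as before. The $p=1$ membership in $\FP$ follows from the closed-form single-edge formula of Appendix~\ref{sec:p1}, which is in fact what underlies $\Eval^{\mathrm{MC}}_1\in\FP$.

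For the additive-error version, query each correlator with precision $b'=b+\lceil\log_2|E|\rceil+1$, still in unary and polynomial. The assembled estimate then has error at most $\tfrac12|E|\,2^{-b'}\le 2^{-b}$. Whatever precision the main proof required of $F_p$ therefore transfers.

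The only potential subtlety is whether the Corollary intends a \emph{single fixed} designated edge per graph, with the oracle forbidden from choosing it. If the edge is fixed by the oracle rather than chosen by the reduction, I would use symmetry of the gadget construction instead. The $K_{L,L}$ variable gadgets and the densely connected anchors make edges fall into few automorphism orbits, so $F_p$ is a short positive combination of orbit representatives. Alternatively, I would relabel vertices so that the designated pair $(r,s)$ lands on any desired edge: since the input graph can be presented with any vertex labeling, each query can present an isomorphic copy in which the target edge sits at $\{r,s\}$. I expect this relabeling argument to be the main point needing care. It is still routine, because the correlator is invariant under relabeling the qubits consistently with the graph.
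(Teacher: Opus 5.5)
Your summation argument is correct only under the weakest reading, in which each query may designate a different edge. Under that reading, summing $\langle Z_uZ_v\rangle$ over all $|E(G)|$ edges at each of the $Q$ phases reconstructs $F_p$ at the restricted angles, and your precision bookkeeping is fine. This shows the \emph{family} of all edge correlators is $\#$P-hard, which is essentially immediate from linearity. It is not what the corollary asserts: the claim is that one fixed correlator suffices, namely the anchor correlator $\langle Z_rZ_s\rangle$ on $e_\star=\{r,s\}$. This is contribution (iii), and Corollary~\ref{cor:diameter-two-hardness} specifies that $\{r,s\}$ joins the two universal vertices.

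You flag this yourself, but neither fallback closes it. The orbit decomposition still needs a correlator for a representative of every orbit, i.e.\ on several distinct edges. The edge $e_\star$, the anchor--$W$ edges, the $K_{L,L}$ edges and the synchronous edges lie in pairwise different orbits, since their endpoint-degree profiles differ. Moreover, for an asymmetric $H$ the edges of different gadgets also lie in different orbits, so the orbits are not few and never collapse to one. Relabeling does not help either. The correlator depends only on the isomorphism class of the pair (graph, designated edge), so a relabeled query is just a query on another edge in disguise. Once the designated edge must join two universal vertices, the only admissible edge of $G(H)$ is $e_\star$ itself.

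The missing idea is that no summation is needed, because the extreme-coefficient analysis applies to $H_{e_\star}$ directly. Lemma~\ref{lem:path-sum} writes $\langle Z_rZ_s\rangle$ as a Laurent polynomial in $z=e^{\mathrm{i}\phi}$. Universality of $r,s$ gives $R_{e_\star}=2N-4$ with only the two maximizers $q_\pm$. So every exponent is at most $D_\star=\mathrm{MC}(G)+2N-4$, and this is attained only by histories $(x,q_\sigma,\tau\one)$ with $x$ a maximum cut. By Lemma~\ref{lem:rigidity}, every such $x$ is balanced on $W$ with separated anchors, so all these histories carry the same weight. This is Theorem~\ref{thm:coefficient}:
\[
a_{D_\star}=-2^{1-N}|\#\mathrm{MC}(G)|\,c^{N/2}(\mathrm{i}s)^{N/2}\bigl[c^N+(-\mathrm{i}s)^N\bigr]\neq0\quad\text{for }\tan\beta=\tfrac12 .
\]
Hence $Q=4m+1$ queries of this \emph{single} correlator at $\phi_j=2\pi j/Q$, followed by the inverse DFT, identify $D_\star$ as the top nonzero exponent. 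They also yield $|\#\mathrm{MC}(H)|=|\#\mathrm{MC}(G)|/2$. The additive-error version is the threshold-and-rounding argument of Corollary~\ref{cor:precision} without the factor $-\tfrac12$.

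Note that in the paper the dependence runs opposite to yours. The global statement is deduced from the correlator one: $e_\star$ uniquely maximizes $R_e$, so the top coefficient of $F_2$ is $-a_{D_\star}/2$.
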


Here the mixer angles are fixed constants independent of the input, and 
only the single tied cost phase angle 
$\phi:=\gamma_1=\gamma_2$ 
is varied between oracle queries.

We show how the hardness dichotomy of the theorem extends directly to the setting of exponentially small additive error in Section~\ref{sec:expoApprox}.

\begin{remark}[Consequences of efficient exact evaluation] 
 Suppose that, for %
 $p\ge 2$, the exact \sh{or exponentially precise} expectation value of
the MaxCut QAOA circuit can be computed in %
polynomial
time on arbitrary graphs. %
Then by Toda's theorem~\cite{Toda1991} the polynomial hierarchy collapses to its lowest level, i.e. P=PH.
\end{remark}

For exponentially precise evaluation, we show below that, for
an $N$-vertex oracle instance, the extreme Laurent coefficient
has magnitude at least %
$2^{-cN}$ for a known constant $c$, and hence $O(N)$ bits of
additive precision suffice to recover the exact integer count
by rounding.

We emphasize that both our results and those of Ref.~\cite{WangEtAl2025} imply that there exist \sh{worst-case} families of graphs for which evaluating $F_p$ appears computationally intractable, but this by no means implies this property applies to all graphs. %
Indeed, Ref.~\cite{WangEtAl2025} establishes complementary 
algorithms parameterized by local treewidth~\cite{WangEtAl2025}, complementing 
results obtained for other special cases %
for which %
rigorous results have been obtained
in the literature \cite{Farhi2014,Farhi2014b,Wang2018,hadfield2018quantum,jiang2017near,bravyi2020obstacles,hastings2019classical,marwaha2021local,wurtz2021maxcut,barak2021classical,marwaha2022bounds,hadfield2022analytical,ozaeta2022expectation,basso2021quantum,boulebnane2021predicting,farhi2022quantum,basso2022performance,boulebnane2025evidence}. 
Furthermore, whether one can efficiently estimate a given QAOA cost expectation value on a classical computer or not, a quantum computer %
may still be required to obtain corresponding problem solution samples~\cite{farhi2016quantum}.

\subsection{Relationship to the results of \cite{WangEtAl2025} and related work}

Prior to \cite{WangEtAl2025}, a sequence of papers \cite{Farhi2014,Farhi2014b,Wang2018,hadfield2018quantum,jiang2017near,bravyi2020obstacles,hastings2019classical,marwaha2021local,wurtz2021maxcut,marwaha2022bounds,ozaeta2022expectation,hadfield2022analytical,basso2021quantum,boulebnane2021predicting,farhi2022quantum,basso2022performance,boulebnane2025evidence} obtained formulas or bounds for $\Eval_p^{\mathrm{MC}}$ in special cases. 
Beyond performance analysis, evaluating QAOA expectation values is also critical for algorithm parameter selection; in \cite{bittel2021training} it was shown that the %
parameter optimization problem in some settings can itself become NP-hard.

For computing QAOA expectation values, Reference~\cite{WangEtAl2025} previously established
NP-hardness for every fixed $p\geq2$ by identifying the maximum cut value with the largest nonzero exponent
of a Laurent polynomial.  
Here we retain the Laurent %
polynomial approach
but replace the
underlying graph construction with one that utilizes a balanced counting gadget, and prove a corresponding exact leading  coefficient identity.  
This results in a simpler and more intuitive %
reduction than that of~\cite{WangEtAl2025}, utilizing fewer types of gadgets in our construction. 
The resulting new conclusions are: (i) $\#$P-hardness;
(ii) recovery of the number, rather than merely the value, of maximum cuts; (iii) hardness
for a single edge correlator; and (iv) the fixed-%
angles restrictions of Theorem~\ref{thm:main}. Our analysis is primarily concerned with the $p=2$ case. The %
$p\geq 2$ statement then follows by padding a
hard depth-two circuit with $p-2$ identity layers (i.e., layers for which $\gamma=\beta=0$).

Our results apply %
instead to computation of QAOA expectation values; 
these quantities 
can also be estimated \textit{quantumly} %
by repeated sampling from the quantum computer. 
In \cite{farhi2016quantum}
Farhi and Harrow %
adapted the postselection-based %
framework of \cite{bremner2011classical}
to show that efficient classical sampling from the output of standard %
QAOA circuits 
even at depth $p=1$ 
would imply a collapse of the polynomial hierarchy (to at least its third level), 
with this result further sharpened in the recent paper~\cite{abolicnvs2026sharp}. 
    Krovi later proved average-case hardness results~\cite{krovi2022average} for approximating output probabilities of random
    quantum circuits, including random $p=1$ QAOA ensembles.
In~\cite{dalzell2020many}
    the authors refine asymptotic supremacy arguments into a fine-grained framework, allowing 
    quantitative statements about the scale required for classical intractability based on standard conjectures in complexity theory. 
Hence in terms of sampling tasks, QAOA circuits provide a potential path to quantum computational supremacy~\cite{farhi2016quantum,harrow2017quantum,dalzell2020many}. Our complementary results show a stronger 
\sh{conditional} collapse of the polynomial hierarchy to PH=P if one could 
efficiently %
evaluate QAOA expectation values (to exponential precision), but with distinct threshold $p=2$.

Our result should also be distinguished from work on more general
quantum mean value problems for constant-depth circuits
\cite{bravyi2021classical}.  
\sh{In that setting, depth refers to physical circuit depth in a local
gate model, and substantially broader circuit and observable
families are considered. 
By contrast the fixed parameter $p$ in
our theorems counts QAOA rounds.}
\sh{As the graphs produced by our
reduction contain universal vertices, a standard two-qubit-gate
implementation of a cost layer has depth $\Omega(N)$; consequently,
our theorem is not a hardness result for constant physical-depth
circuits.}
\sh{Its novelty is instead that exact $\#\mathrm{P}$-hardness
already occurs within the rigid family of standard fixed-round
MaxCut-QAOA circuits, with either the MaxCut Hamiltonian or one
designated two-qubit correlator as the observable.  Moreover, our
proof is qualitatively different, deriving hardness directly from
an extreme Laurent-polynomial coefficient rather than from
postselection-induced universality~\cite{bremner2011classical,bravyi2021classical}.  On the other hand, our theorem
concerns exact, or exponentially precise, evaluation and therefore
does not establish hardness in the constant-additive-error regime
considered in Ref.~\cite{bravyi2021classical}.}

Further afield, Szegedy~\cite{szegedy2019qaoa} previously proposed viewing QAOA %
expectation values as graph invariants, and investigated whether structural information 
such as graph non-isomorphism can be recovered from %
comparing these quantities. 
\sh{Distinct from the expectation value setting, in \cite{hadfield2021representation} it was show that 
$\#\mathrm{P}$-hardness can arise in the construction of
Hamiltonian representations for compactly specified Boolean
functions.} %
QAOA has also been proposed as a solution sampler
for approximate counting
\cite{DrapeauBanerjeeKourtis2025}.  
Finally, we note that beyond \cite{WangEtAl2025}, Laurent polynomials have also proven useful in two recent works solving QAOA for MaxCut on the special case of \lq\lq ring of disagrees\rq\rq\ (two-regular %
graphs)\cite{marwaha2026qaoa,kol2026machine}, confirming %
optimal expected cost scaling with $p$ originally conjectured in \cite{Farhi2014}.

The following section gives the details and proof of Theorem~\ref{thm:main}. %
We show a direct application of our results to assessing the hardness of computing gradients and Hessians for QAOA circuits in Sec.~\ref{sec:derivative-hardness}, and provide some additional discussion of our results in Section~\ref{sec:discussion}.

\section{A balanced maximum-cut counting gadget}\label{sec:gadget}

\begin{figure}[h]
\centering
\includegraphics[width=0.85\textwidth]{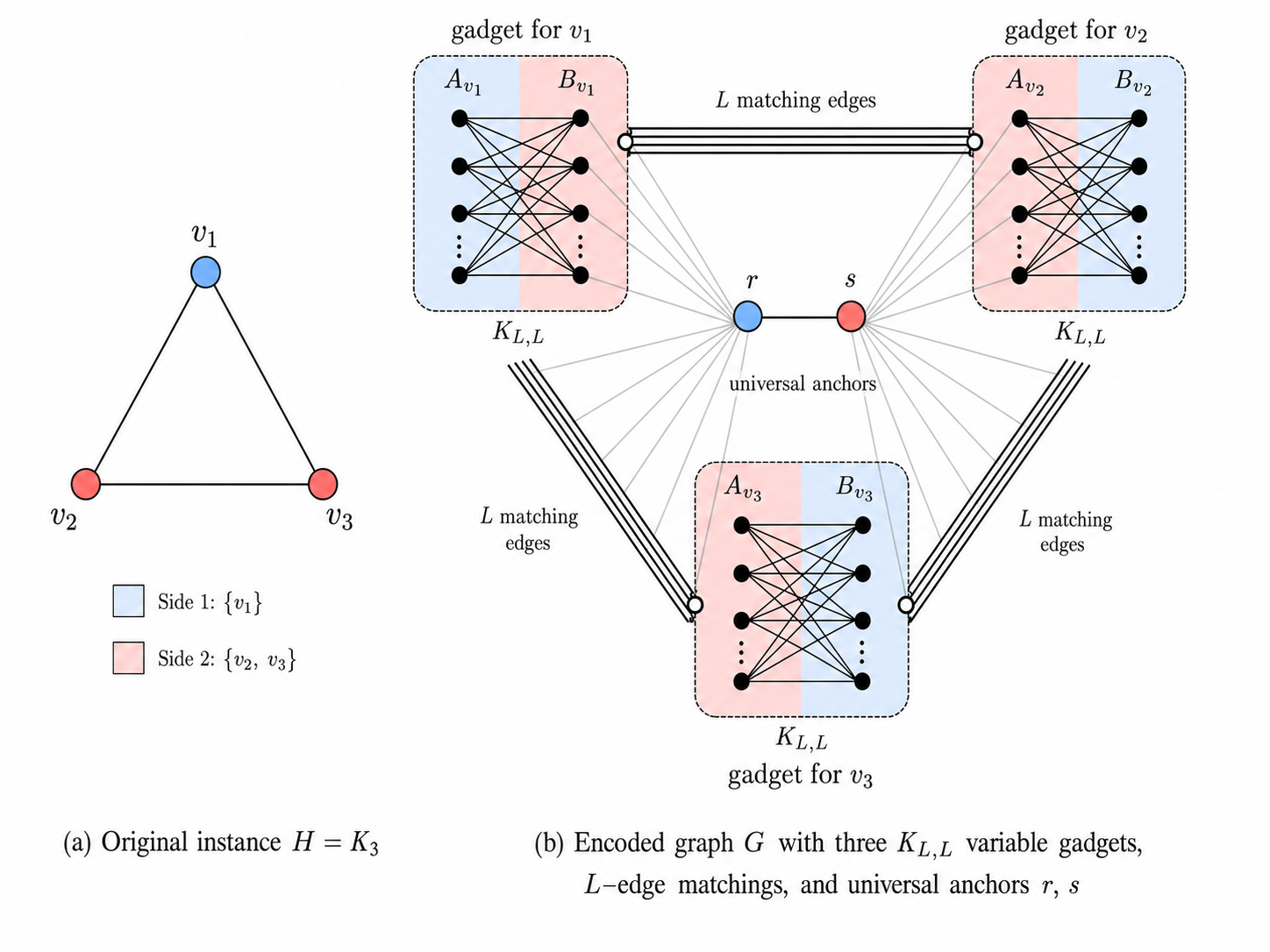}
\caption{The balanced counting gadget for the source instance $K_3$ (triangle graph).  %
Blue and red indicate the two labels of one source maximum cut and its corresponding lift.  Every $K_{L,L}$ variable
gadget is cut completely, so each $A_v$ and $B_v$ lie on opposite sides; the
synchronous matchings among the $A$-parts reproduce the three source edges.
The separated anchors $r$ and $s$ are adjacent to one another and each is
joined to every vertex in $W$.  Each %
part $A_v,B_v$
in panel~(b) %
contains $L$ vertices; only representative vertices and edges are shown. 
For $H=K_3$, the %
constructed graph~$G(H)$ has $L=9$, $N=56$, and %
maximum cut counts $|\M(G)|=2|\M(H)|=12$.
\sh{This figure was created with assistance from  ChatGPT-5.6-Sol AI tools.}}
\label{fig:triangle-gadget}
\end{figure}

Let input graph $H=(V,E)$ have $n$ vertices and maximum degree $\Delta$.  
Define the odd integer
\begin{equation}
 L=2(\Delta+2)+1.
 \label{eq:L}
\end{equation}
For each $v\in V( H)$, introduce disjoint vertex sets $A_v=\{a_{v,1},…,a_{v,L}\}$ and $B_v$, each of size $L$, and add all
edges of the complete bipartite graph between them.  For each
$\{u,w\}\in E(H)$ and each $j\in[L]$, add the synchronous edge
$\{a_{u,j},a_{w,j}\}$ connecting %
the corresponding vertices in $A_u$ and $A_w$.  (Synchronous edges are not added for vertices in the $B_v$.) Finally add two anchor vertices $r,s$, the edge $e_\star=\{r,s\}$, and
join each anchor to every vertex in
\begin{equation}
 W=\bigcup_{v\in V(H)}(A_v\cup B_v).
\end{equation}
Denote the resulting graph by $G=\mathcal G( H)$.  It is unweighted, simple, and
connected, with
\begin{equation}
 N:=|V(G)|=2 nL+2 
 \label{eq:N}
\end{equation}
\sh{vertices. The number of edges is
\begin{equation}
  |E(G)|
  =
  nL^2+L|E(H)|+4nL+1.
\end{equation}
Since \(\Delta\le n-1\), we have \(L=O(n)\), and hence
 $N=|V(G)|=O(n^2)$ and 
 $ |E(G)|=O(n^3)$. 
Thus \(G(H)\), and all oracle queries generated from it, can be
constructed in polynomial time.  }

\begin{lemma}[Gadget rigidity]\label{lem:rigidity}
Every maximum cut of $G$ places each $A_v$ and $B_v$ monochromatically on opposite sides,
places $r$ and $s$ on opposite sides, and induces a maximum cut of $H$ through the
orientation of the sets $A_v$.  Conversely, every maximum cut of $H$ and either
orientation of $(r,s)$ determine a maximum cut of $G$.  Consequently,
\begin{align}
 \mc(G)&= nL^2+2 nL+1+L\mc( H),
 \label{eq:MCG}\\
 |\M(G)|&=2|\M(H)|.
 \label{eq:MG}
\end{align}
Moreover every maximum cut contains exactly $|W|/2= nL$ vertices of $W$ on each
side.
\end{lemma}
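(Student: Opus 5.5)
The plan is to bound the cut value of an arbitrary assignment $x\in\{\pm1\}^{N}$ by splitting $E(G)$ into three groups: the $K_{L,L}$ gadget edges, the synchronous edges, and the anchor edges (the edges at $r$ or $s$, including $e_\star$). I will bound each group separately and then show the bounds can only be met simultaneously by the configurations described in the lemma.

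\emph{Gadget edges.} For each $v$ let $a_v,b_v$ be the numbers of vertices of $A_v,B_v$ on the $+1$ side, and set $S_v=a_v+b_v$ and $d_v=S_v-L$. The gadget contributes $a_v(L-b_v)+(L-a_v)b_v=LS_v-2a_vb_v$. For fixed $S_v$ this is maximized by making $a_vb_v$ as small as possible. That gives $a_vb_v=0$ if $S_v\le L$, and $a_vb_v=(S_v-L)L$ if $S_v\ge L$. Either way the loss relative to $L^2$ is at least $L|d_v|$. When $d_v=0$ the loss is $2a_vb_v$, which vanishes only if $\{a_v,b_v\}=\{0,L\}$, i.e.\ $A_v$ and $B_v$ are monochromatic on opposite sides.

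\emph{Synchronous edges.} These split into $L$ disjoint layers. Layer $j$ is a copy of $H$ on the vertices $\{a_{v,j}\}_{v}$, so its contribution is the cut value in $H$ of the assignment $v\mapsto x_{a_{v,j}}$. Hence the synchronous edges contribute at most $L\,\mc(H)$, with equality iff every layer induces a maximum cut of $H$.

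\emph{Anchor edges.} If $x_r\neq x_s$, every $w\in W$ is cut from exactly one anchor, so these edges contribute exactly $|W|+1=2nL+1$. If $x_r=x_s$, they contribute $2k$, where $k$ is the number of $W$-vertices opposite the anchors. Writing $t=\sum_v S_v$ for the number of $W$-vertices on the $+1$ side, we have $k\in\{t,\,2nL-t\}$ and $t-nL=\sum_v d_v$, so $2k\le 2nL+2\sum_v|d_v|$.

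Combining the three bounds gives, in the case $x_r=x_s$,
\[
\Cut_G(x)\le nL^2+L\,\mc(H)+2nL+(2-L)\textstyle\sum_v|d_v|\le nL^2+2nL+L\,\mc(H),
\]
using $L>2$. This is strictly below the claimed value $nL^2+2nL+1+L\,\mc(H)$ from \eqref{eq:MCG}. In the case $x_r\neq x_s$ we get $\Cut_G(x)\le nL^2+2nL+1+L\,\mc(H)-\sum_v\mathrm{loss}_v$.

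For the lower bound, take any maximum cut $y$ of $H$. Put $A_v$ on side $y_v$, $B_v$ on side $-y_v$, and give $r,s$ opposite sides. This attains the bound exactly, which proves \eqref{eq:MCG}.

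Consequently, every maximum cut of $G$ has $x_r\neq x_s$ and zero loss in every gadget. So each $A_v,B_v$ is monochromatic with the two parts on opposite sides. In particular each $K_{L,L}$ is completely cut and $W$ is split exactly $nL$ to $nL$. Because the $A_v$ are monochromatic, all $L$ layers induce the same assignment of $H$. Equality in the synchronous bound then forces that assignment to be a maximum cut of $H$.

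Conversely, a maximum cut of $G$ is determined by this induced cut $y\in\M(H)$ (which fixes all of $W$) together with the orientation of $(r,s)$, which has two choices. This gives $|\M(G)|=2|\M(H)|$.

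The step I expect to be the main obstacle is ruling out the case $x_r=x_s$. There the anchors can gain up to $2\sum_v|d_v|$ by unbalancing $W$, and this must be beaten by the gadget loss. The per-gadget estimate $\mathrm{loss}_v\ge L|d_v|$, together with $L\ge 3$ (certainly true for $L=2(\Delta+2)+1$), is exactly what handles it. The observation that the synchronous edges are bounded layer-by-layer by $\mc(H)$, regardless of monochromaticity, removes the need for any further role of the specific size of $L$.
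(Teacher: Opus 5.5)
Your proof is correct, but it takes a different route from the paper's.

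The paper argues by local exchange. If some $K_{L,L}$ gadget is not properly oriented, let $q\le L$ be the Hamming distance to the nearer proper orientation. Correcting it gains at least $Lq/2$ internal edges and loses at most $(\Delta+2)q$ external edges, so $L>2(\Delta+2)$ forces every gadget to be proper, which makes $W$ balanced. Flipping one anchor then shows $r,s$ must be split, and the synchronous edges reduce the count to $\M(H)$.

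You instead do a global accounting. You bound the gadget, synchronous, and anchor edge classes separately, expressing the slack of the first and third through the imbalances $S_v-L$. You then match the bound with an explicit construction and read off the structure of maximum cuts from the equality conditions. Two facts drive this, and neither appears in the paper's proof:
\begin{itemize}
\item the synchronous edges form $L$ disjoint copies of $H$, so they contribute at most $L\,\mc(H)$ whatever the gadgets do;
\item split anchors contribute exactly $2nL+1$ whatever $W$ does.
\end{itemize}

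What your approach buys is a sharper statement. It needs only $L\ge 2$, independent of $\Delta$, because the $+1$ from $e_\star$ already gives strictness when $L=2$; your stated use of $L>2$ is more than necessary. It also avoids the nearer-orientation and edge double-counting bookkeeping, and it proves the value formula and the characterization in one pass.

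The paper's exchange argument is instead modular. It uses only that each gadget vertex has at most $\Delta+2$ external neighbors, so it would carry over to other external wirings without knowing how those edges are organized. The price is that $L$ must dominate the external degree.

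One cosmetic point: your symbol $d_v$ collides with the paper's degree notation $d_u,d_v$ used later in the boundary-increment lemma.
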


\begin{proof}
Each vertex of a variable $K_{L,L}$ has at most $\Delta+2$ external neighbors.  
Consider one variable gadget with parts $A_v$ and $B_v$.  Relative
to the nearer of its two proper bipartite orientations, suppose that
$a$ vertices of $A_v$ and $b$ vertices of $B_v$ are misplaced, and set
$q=a+b$.  %
The Hamming distances from the current labeling of
\(A_v\cup B_v\) to the two proper bipartite orientations sum to \(2L\).
Therefore the nearer orientation has Hamming distance
\(q\le L\).
The
number of cut edges internal to the gadget is
\[
(L-a)(L-b)+ab=L^2-Lq+2ab.
\]
Thus correcting the orientation of the gadget gains at least 
\[
Lq-2ab
   \geq Lq-\frac{q^2}{2}
   \geq \frac{Lq}{2}
\]
internal cut edges, where we used $2ab\leq q^2/2$ and $q\leq L$.
Each moved vertex has at most $\Delta+2$ external neighbors, so this
correction loses at most $(\Delta+2)q$ external cut edges.  
\sh{This bound counts incidences between moved vertices and external edges;
an external edge incident to two moved vertices may therefore be counted
twice, but this is harmless for an upper bound.} 
Since
$L>2(\Delta+2)$, the net change is strictly positive whenever $q>0$.
Consequently every maximum cut orients each $K_{L,L}$ gadget
properly.
Thus each gadget
contributes $L^2$ cut edges and exactly $L$ vertices to each side.

If the anchors lie on the same side, flipping one anchor leaves the total number of cut
anchor--$W$ edges unchanged, because $W$ is balanced, and additionally cuts $\{r,s\}$.
Hence the anchors are opposite.  With opposite anchors, every $w\in W$ contributes exactly
one cut anchor edge.  The only orientation-dependent edges are the synchronous edges:
the $L$ edges representing $\{u,v\}\in E( H)$ are cut exactly when the orientations
of $A_u$ and $A_v$ differ.  Equations \eqref{eq:MCG} and \eqref{eq:MG} follow immediately.
\end{proof}

The Hamming distances from a labeling of \(A_v\cup B_v\) to the two
proper bipartite orientations sum to \(2L\); hence the nearer
orientation has distance \(q\le L\).  Correcting all \(q\) misplaced
vertices simultaneously changes only external edges incident to those
vertices.  Bounding the possible loss by \((\Delta+2)q\) counts
incidences rather than distinct edges and may therefore overcount,
which is harmless for an upper bound.

Suppose a cut does not respect one of the balanced complete bipartite
gadgets \(A_v\cup B_v\). The Hamming distances from the induced labeling
to the two proper bipartite orientations sum to \(2L\); hence one of
these orientations differs on at most \(L\) vertices. Let \(q\le L\)
denote this minimum Hamming distance. Correcting those \(q\) vertices
restores all \(L^2\) gadget edges while changing only external edges
incident to the corrected vertices. The possible loss is at most
\((\Delta+2)q\), ...

Our graph construction is simpler and more symmetric than the gadget used in Ref.~\cite{WangEtAl2025}.  The latter employs asymmetric complete-bipartite vertex gadgets, a separate large complete-bipartite frame, and an additional controller vertex in order to isolate the maximal Laurent exponent.  By contrast, our reduction uses only balanced ($K_{L,L}$) variable gadgets and two universal anchors (r,s). %

\subsection{The extreme coefficient at depth two}\label{sec:coefficient}

Fix an edge $e=\{u,v\}$ of a graph $G$. 
Set $\gamma_1=\gamma_2=\phi$, 
$\beta_1=\beta$, $\beta_2=\pi/4$, and define
\begin{equation}
  H_e(e^{\ii\phi},\beta)
 :=\langle\psi_2(\phi,\phi;\beta,\pi/4)|\,Z_uZ_v\,|
 \psi_2(\phi,\phi;\beta,\pi/4)\rangle.
 \label{eq:edge-correlator}
\end{equation}
For $a,b\in\{\pm1\}$ let
$ K_\beta^\pm(a,b)=\langle a|e^{\pm\ii\beta X}|b\rangle$. 
Writing $c=\cos\beta$ and $s=\sin\beta$ we have
\begin{equation}
 K_\beta^\pm(a,b)=
 \begin{cases}c,&a=b,\\ \pm\ii s,&a\ne b,\end{cases}
 \label{eq:kernel-values}
\end{equation}
Let $T_e$ operate on an assignment string so as to flip the two %
vertex spins in $e$.

\begin{lemma}[Reduced path sum]\label{lem:path-sum}
For $z=e^{\ii\phi}$,
\begin{align}
 H_e(z;\beta)
 ={}&-2^{-N}\sum_{x,y,t\in\{\pm1\}^{V(G)}}
 y_uy_v\,
 z^{\Cut_G(x)-\Cut_G(t)+\Cut_G(y)-\Cut_G(T_ey)}
 \notag\\
 &\hspace{1.2cm}\times
 \prod_{w\in V(G)}K_\beta^+(x_w,y_w)
 K_\beta^-((T_ey)_w,t_w).
 \label{eq:path-sum}
\end{align}
Hence $H_e(z;\beta)$ is a Laurent polynomial in $z$.
\end{lemma}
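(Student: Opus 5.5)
The plan is to reduce the second layer by conjugating the observable $Z_uZ_v$ through it exactly, and then to expand the remaining depth-one amplitudes in the computational basis.

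\textbf{Step 1: conjugate through the second mixer.} Write $\lvert\psi_2\rangle=e^{-\ii\frac{\pi}{4}B}e^{-\ii\phi C_G}\lvert\psi_1\rangle$ with $\lvert\psi_1\rangle=e^{-\ii\beta B}e^{-\ii\phi C_G}\lvert+\rangle^{\otimes N}$. Since $B$ is a sum of commuting single-qubit terms, only qubits $u$ and $v$ matter. For each of them,
\[
e^{\ii\frac{\pi}{4}X}Ze^{-\ii\frac{\pi}{4}X}=Ze^{-\ii\frac{\pi}{2}X}=-\ii ZX=Y .
\]
Hence $e^{\ii\frac{\pi}{4}B}Z_uZ_ve^{-\ii\frac{\pi}{4}B}=Y_uY_v$. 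Using $Y=\ii XZ$, this equals $-X_uX_vZ_uZ_v$.

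\textbf{Step 2: conjugate through the second cost layer.} Note that $X_uX_v$ acts on basis states as $T_e$, i.e.\ $X_uX_v\lvert y\rangle=\lvert T_ey\rangle$. The operator $O:=-e^{\ii\phi C_G}X_uX_vZ_uZ_ve^{-\ii\phi C_G}$ then has the single nonzero matrix element in each column
\[
\langle T_ey\rvert O\lvert y\rangle=-\,y_uy_v\,z^{\Cut_G(T_ey)-\Cut_G(y)} ,
\]
since $C_G$ is diagonal with eigenvalues $\Cut_G$. Therefore
\[
H_e=-\sum_y y_uy_v\,z^{\Cut_G(T_ey)-\Cut_G(y)}\,\overline{\langle T_ey|\psi_1\rangle}\,\langle y|\psi_1\rangle .
\]

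\textbf{Step 3: expand the depth-one amplitudes.} The first-layer amplitudes are
\[
\langle y|\psi_1\rangle=2^{-N/2}\sum_x z^{-\Cut_G(x)}\prod_w\langle y_w|e^{-\ii\beta X}|x_w\rangle .
\]
Each single-qubit factor is given by \eqref{eq:kernel-values}, which is symmetric in its two arguments. For the conjugated amplitude at $T_ey$, introduce a second summation string $t$; complex conjugation turns $K^-_\beta$ into $K^+_\beta$ and $z$ into $z^{-1}$. Collecting terms gives the overall factor $-2^{-N}$, the prefactor $y_uy_v$, a product of kernels over all $w\in V(G)$, and a single power of $z$ that is the signed sum of the four cut values.

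\textbf{Main obstacle: matching the sign conventions.} The only delicate point is that this derivation produces the mirror-image convention of \eqref{eq:path-sum}: the exponent of $z$ comes out negated, and $K^\pm$ are swapped. I would not re-derive the conventions. Instead, I would note that $H_e$ is the expectation of a Hermitian operator and is therefore real, so $H_e=\overline{H_e}$. Taking the complex conjugate of the whole sum maps $z\mapsto z^{-1}$ (because $|z|=1$) and $K^\pm_\beta\mapsto K^\mp_\beta$ (because $c,s$ are real), while leaving $y_uy_v$ unchanged. This yields exactly \eqref{eq:path-sum}, with $x$ attached to $K^+$ and $t$ to $K^-$.

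\textbf{Laurent property.} Finally, every exponent is an integer combination of cut values, which are integers, and the sum over $x,y,t$ is finite. Hence $H_e$ is a Laurent polynomial in $z$ whose coefficients are polynomials in $c,s$.
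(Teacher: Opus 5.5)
Your proposal is correct and is essentially the paper's argument: conjugating $Z$ through the final $\pi/4$ mixer to obtain $Y_uY_v=-X_uX_vZ_uZ_v$ is the operator form of the paper's kernel sums over the intermediate spins $q_w$. Those sums give $\delta_{y_w,\bar y_w}$ off the edge and $-\ii y_w\delta_{\bar y_w,-y_w}$ on it. Your use of the reality of $H_e$ to reach the stated convention is valid, since conjugation sends $z\mapsto z^{-1}$ on the unit circle and $K^\pm_\beta\mapsto K^\mp_\beta$. The same match also follows term by term, by relabeling $y\mapsto T_ey$ and $x\leftrightarrow t$ and using that $K^\pm_\beta$ is symmetric in its arguments.
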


\begin{proof}
Insert computational-basis resolutions between all unitaries.  The two cost layers on the
ket and bra contribute the exponent in \eqref{eq:path-sum}.  Summing over the intermediate
spins adjacent to the final mixer gives, for $w\notin e$,
\begin{equation}
 \sum_{q_w}K_{\pi/4}^+(y_w,q_w)K_{\pi/4}^-(q_w,\bar y_w)
 =\delta_{y_w,\bar y_w},
\end{equation}
and, for $w\in e$,
\begin{equation}
 \sum_{q_w}q_wK_{\pi/4}^+(y_w,q_w)K_{\pi/4}^-(q_w,\bar y_w)
 =-\ii y_w\delta_{\bar y_w,-y_w}.
\end{equation}
The endpoint product is $-y_uy_v$ and enforces $\bar y=T_ey$, yielding
\eqref{eq:path-sum}.
\end{proof}

Define the maximal boundary increment
\begin{equation}
 R_e=\max_y\bigl[\Cut_G(y)-\Cut_G(T_ey)\bigr].
 \label{eq:boundary}
\end{equation}

\begin{lemma}[Boundary increment]\label{lem:boundary}
For every edge $e=\{u,v\}$ of a simple graph,
\begin{equation}
 R_e=d_u+d_v-2.
 \label{eq:boundary-value}
\end{equation}
\end{lemma}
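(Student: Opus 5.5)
We compute the change $\Cut_G(y)-\Cut_G(T_ey)$ explicitly for an arbitrary assignment $y$, and then maximize over $y$. Flipping both endpoints of $e=\{u,v\}$ leaves the edge $e$ itself unchanged, since $y_uy_v=(-y_u)(-y_v)$. An edge not incident to $u$ or $v$ is likewise unaffected. The only edges whose cut status changes are therefore the edges $\{u,w\}$ with $w\neq v$ and the edges $\{v,w\}$ with $w\neq u$. Because $G$ is simple, there are exactly $d_u-1$ edges of the first kind and $d_v-1$ of the second. An edge $\{v,w\}$ with $w\notin\{u,v\}$ touches only $v$ among the flipped vertices, so it is never counted twice.

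For each such edge $\{u,w\}$, flipping $y_u$ alone toggles its cut status. Its contribution to $\Cut_G(y)-\Cut_G(T_ey)$ is $+1$ if it is cut under $y$ and $-1$ otherwise, which equals $-y_uy_w$. Summing over all affected edges gives
\begin{equation*}
\Cut_G(y)-\Cut_G(T_ey)=-y_u\sum_{w\in N(u)\setminus\{v\}}y_w\;-\;y_v\sum_{w\in N(v)\setminus\{u\}}y_w .
\end{equation*}
Each of the $d_u+d_v-2$ terms has absolute value $1$, so the increment is at most $d_u+d_v-2$ for every $y$.

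It remains to show the bound is attained. Set $y_u=y_v=+1$ and $y_w=-1$ for all other vertices. Then every term above equals $+1$, giving the value $d_u+d_v-2$. A common neighbour $w$ of $u$ and $v$ appears in both sums, but it needs the same sign in each, so the choice is consistent.

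The argument is elementary. The only step requiring care is the bookkeeping for shared neighbours and for the edge $e$ itself, which the explicit formula handles. Combining the upper bound with this explicit $y$ gives $R_e=d_u+d_v-2$, which is \eqref{eq:boundary-value}.
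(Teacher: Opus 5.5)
Your proof is correct and takes the same route as the paper. Only the $d_u+d_v-2$ edges with exactly one endpoint in $e$ can change cut status, which gives the upper bound. The assignment $y_u=y_v$ with every other vertex given the opposite spin attains it, consistently even when $u$ and $v$ share neighbours, and your identity $\Cut_G(y)-\Cut_G(T_ey)=-y_u\sum_{w\in N(u)\setminus\{v\}}y_w-y_v\sum_{w\in N(v)\setminus\{u\}}y_w$ just makes the paper's bookkeeping explicit.
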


\begin{proof}
Only edges incident on exactly one endpoint can change when both endpoints are flipped, so
$R_e\leq d_u+d_v-2$.  Equality is attained by assigning the same spin to $u,v$ and the
opposite spin to every other neighbor of either endpoint. This assignment is consistent even when \(u\) and \(v\) have common
neighbors, since every vertex in
\((N(u)\cup N(v))\setminus\{u,v\}\) receives the same spin opposite
to \(u\) and \(v\).
\end{proof}

For the anchor edge $e_\star=\{r,s\}$ of $G=\mathcal G(H)$, both endpoints are
universal and therefore
\begin{equation}
 R_{e_\star}=2N-4.
 \label{eq:Rstar}
\end{equation}
The only maximizing boundary assignments are
\begin{equation}
 (q_\sigma)_r=(q_\sigma)_s=\sigma,
 \qquad
 (q_\sigma)_w=-\sigma\quad(w\in W),
 \qquad \sigma\in\{\pm1\},
 \label{eq:q-sigma}
\end{equation}
for which $T_{e_\star}q_\sigma=-\sigma\one$.  Since every nonanchor vertex has degree
strictly less than $N-1$, $e_\star$ uniquely maximizes $R_e$ among all edges. 
Hence every exponent in \eqref{eq:path-sum} is at most
\begin{equation}
 D_\star=\mc(G)+2N-4.
 \label{eq:Dstar}
\end{equation} 
\sh{Using \(c_G(t)\ge0\) the exponent in Eq.~\eqref{eq:path-sum} decomposes as
\[
c_G(x)-c_G(t)
+\bigl(c_G(y)-c_G(T_{e_\star}y)\bigr)
\le
\mathrm{MC}(G)+R_{e_\star}
=
D_\star.
\]
Since each summand is bounded
independently, } %
attaining the upper bound requires $c_G(x)=\mathrm{MC}(G)$, $c_G(t)=0$, and $c_G(y)-c_G(T_{e_\star}y)=R_{e_\star}$.  Since
$G=G(H)$ is connected, its only zero-cut assignments are the two
constant assignments.  
Hence
at equality, necessarily
\begin{equation}
 x\in\M(G),
 \qquad
  t=\tau\one,\,\,\tau \in \{\pm 1\},
 \qquad
 y\in\{q_+,q_-\}.
 \label{eq:extreme-histories}
\end{equation}
By Lemma~\ref{lem:rigidity}, each $x\in\M(G)$ is balanced on $W$ and separates the anchors.
Consequently
\begin{equation}
 d_{Hamm}(x,q_\sigma)=1+\frac{|W|}{2}=\frac N2, %
 \label{eq:hamming}
\end{equation}
independently of $x$ and $\sigma$.

\begin{theorem}[Extreme-coefficient identity]\label{thm:coefficient}
Recall $D_* := \mathrm{MC}(G)+2N-4$ as in Eq.~\eqref{eq:Dstar}. Write
\begin{eqnarray}
    H_{e_\star}(z;\beta)=\sum_{k\in\mathbb Z}a_k(\beta)z^k.
\end{eqnarray}
For \(G=G(H)\), we have
  $a_k(\beta)=0$ for every $k>D_*$, 
and the coefficient at \(D_\star\) is
\begin{equation}
 a_{D_*}(\beta)
 =-2^{1-N}|\M(G)|\,
 c^{N/2}(\ii s)^{N/2}\left[c^N+(-\ii s)^N\right],
 \label{eq:edge-top-coefficient}
\end{equation}
where \(c=\cos\beta\) and \(s=\sin\beta\). %
In particular, if \(\tan\beta=1/2\), then
\(a_{D_\star}(\beta)\neq0\), and so 
\[
  D_\star=\max\{k\in\mathbb Z:a_k(\beta)\neq0\}.
\]
Hence, for this mixer angle \(|\#\mathrm{MC}(G)|\) is recovered by division
by a known nonzero algebraic factor.
\end{theorem}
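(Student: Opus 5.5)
The plan is to read off the coefficient of $z^{D_\star}$ directly from the reduced path sum of Lemma~\ref{lem:path-sum} with $e=e_\star$, using the characterization \eqref{eq:extreme-histories} of the histories that attain the maximal exponent.

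\textbf{Vanishing above $D_\star$.} The bound displayed just before \eqref{eq:extreme-histories} shows that every summand of \eqref{eq:path-sum} carries an exponent
\[
\Cut_G(x)-\Cut_G(t)+\Cut_G(y)-\Cut_G(T_{e_\star}y)\le \mc(G)+R_{e_\star}=D_\star .
\]
Hence $a_k(\beta)=0$ for all $k>D_\star$.

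\textbf{Collecting the extreme histories.} By \eqref{eq:extreme-histories}, $a_{D_\star}(\beta)$ is the sum of the weights $-2^{-N}y_ry_s\prod_w K^+_\beta(x_w,y_w)K^-_\beta((T_{e_\star}y)_w,t_w)$ over the triples with $x\in\M(G)$, $y=q_\sigma$ and $t=\tau\one$, where $\sigma,\tau\in\{\pm1\}$. I would evaluate each factor of the weight in turn.

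First, $y_ry_s=\sigma^2=1$.

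Second, consider $\prod_w K^+_\beta(x_w,(q_\sigma)_w)$. By Lemma~\ref{lem:rigidity}, $x$ separates $r$ and $s$, so exactly one anchor disagrees with $q_\sigma$. Also $x$ is balanced on $W$, so exactly $nL$ vertices of $W$ disagree with $-\sigma$. This gives the Hamming distance \eqref{eq:hamming} of $N/2$. By \eqref{eq:kernel-values} the first product therefore equals $c^{N/2}(\ii s)^{N/2}$ for every $x$ and every $\sigma$.

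Third, consider the bra-side product. Since $T_{e_\star}q_\sigma=-\sigma\one$, it is $\prod_w K^-_\beta(-\sigma,\tau)$. This equals $c^N$ when $\tau=-\sigma$ and $(-\ii s)^N$ when $\tau=\sigma$. Summing over $\tau$ therefore yields $c^N+(-\ii s)^N$, independently of $\sigma$.

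Finally, summing over the two values of $\sigma$ gives a factor $2$, and summing over $x$ gives $|\M(G)|$. Together these produce \eqref{eq:edge-top-coefficient}.

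\textbf{Main obstacle and nonvanishing.} The step needing the most care is the uniformity of the weight over $x\in\M(G)$. Everything depends on the fact that the Hamming distance $d(x,q_\sigma)$ does not depend on which maximum cut $x$ is. This is exactly where the rigidity statements of Lemma~\ref{lem:rigidity} (anchors separated, $W$ balanced) are used. Without them, different maximum cuts would carry different powers of $c$ and $s$, and the count would not factor out.

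For nonvanishing at $\tan\beta=1/2$, note that $c,s\neq0$. Moreover $N=2nL+2$ is even, so
\[
c^N+(-\ii s)^N=c^N+(-1)^{N/2}s^N .
\]
This is nonzero because $|c|=2|s|\neq|s|$. Hence $a_{D_\star}\neq0$, so $D_\star$ is the top exponent. Then $|\M(G)|$, and via \eqref{eq:MG} also $|\M(H)|$, is obtained by dividing $a_{D_\star}$ by the explicit nonzero algebraic prefactor, which lies in $\mathbb{Q}(\ii,\sqrt5)$.
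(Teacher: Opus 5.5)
Your proposal is correct and follows essentially the same route as the paper's proof. Both restrict to the extreme histories $(x,q_\sigma,\tau\one)$, use rigidity to obtain the uniform Hamming distance $N/2$, factor each weight into the endpoint sign $1$, the ket-side product $c^{N/2}(\ii s)^{N/2}$ and the $\tau$-sum $c^N+(-\ii s)^N$, and conclude nonvanishing from the unequal magnitudes of the two bracket terms. Your explicit use of the parity of $N$ and of $c,s\neq0$ only spells out the nonvanishing step slightly more than the paper does.
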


\begin{proof}
Only the paths in Eq.~\eqref{eq:path-sum} satisfying the equality conditions in
Eq.~\eqref{eq:extreme-histories} can contribute to~$z^{D_*}$. 
The exponent in \eqref{eq:path-sum} attains $D^\star = \mathrm{MC}(G) + 2N-4$
if and only if $c_G(x) = \mathrm{MC}(G)$, $c_G(t) = 0$, and
$c_G(y) -c_G(T_{e^\star} y) = 2N-4$ simultaneously. %

Fix such a path $(x, q_\sigma, \tau\mathbf{1})$ with
$x \in \#\mathrm{MC}(G)$ and $\sigma, \tau \in \{\pm 1\}$. 
First observe the endpoint factor is
$y_r y_s = (q_\sigma)_r (q_\sigma)_s = \sigma^2 = 1$ for both $\sigma$. 
Next, by \eqref{eq:hamming} the Hamming distance
$h = d_{\mathrm{Hamm}}(x, q_\sigma) = N/2$ is the same integer for every
$x \in \#\mathrm{MC}(G)$ and both $\sigma$, so by \eqref{eq:kernel-values} we have
\begin{equation}
\prod_{w \in V(G)} K^{+}_{\beta}\bigl(x_w, (q_\sigma)_w\bigr)
= c^{\,N-h}(\mathrm{i}s)^{h}
= c^{N/2}(\mathrm{i}s)^{N/2},
\label{eq:Kplus-product}
\end{equation}
independently of $x$ and $\sigma$.
Finally, by \eqref{eq:extreme-histories} we have %
$t_w = \tau$ for all $w$, and since $T_{e^\star} q_\sigma = -\sigma\mathbf{1}$,
\begin{equation}
\prod_{w \in V(G)} K^{-}_{\beta}\bigl((T_{e^\star} q_\sigma)_w,\, t_w\bigr)
= K^{-}_{\beta}(-\sigma,\tau)^{N}
=
\begin{cases}
c^{N} & \tau = -\sigma\\[2pt]
(-\mathrm{i}s)^{N} & \tau = \sigma.
\end{cases}
\label{eq:Kminus-product}
\end{equation}
Summing \eqref{eq:Kminus-product} over $\tau \in \{\pm 1\}$ gives
$c^{N} + (-\mathrm{i}s)^{N}$, independently of $\sigma$. 
Since the endpoint factor, the product \eqref{eq:Kplus-product}, and the
$\tau$-sum are all independent of $x$ and $\sigma$, the triple sum over the
surviving paths factorizes as
\begin{align}
a_{D_*}(\beta)
&= -2^{-N} \sum_{x \in \#\mathrm{MC}(G)} \; \sum_{\sigma \in \{\pm 1\}}
   \; \sum_{\tau \in \{\pm 1\}}
   y_r y_s \prod_{w} K^{+}_{\beta}\bigl(x_w,(q_\sigma)_w\bigr)
   \, K^{-}_{\beta}\bigl((T_{e^\star} q_\sigma)_w, t_w\bigr) \notag\\
&= -2^{-N} \cdot \underbrace{|\#\mathrm{MC}(G)|}_{x\text{-sum}}
   \cdot \underbrace{2}_{\sigma\text{-sum}}
   \cdot \underbrace{1}_{y_r y_s}
   \cdot \underbrace{c^{N/2}(\mathrm{i}s)^{N/2}}_{\eqref{eq:Kplus-product}}
   \cdot \underbrace{\bigl[c^{N} + (-\mathrm{i}s)^{N}\bigr]}_{\tau\text{-sum}}
   \notag, %
\label{eq:extreme-coeff-derivation}
\end{align}
which rearranges to give Eq.~\eqref{eq:edge-top-coefficient}. 
For $\tan\beta=1/2$, the two terms in the bracket always have unequal
magnitudes and cannot cancel, which implies $a_{D_*}(\beta)\neq 0.$
\end{proof}

For the global expectation,
\begin{equation}
 \mathcal{F}_2(G;z,\beta):=F_2(G;\phi,\phi,\beta,\pi/4)=\frac{|E(G)|}{2}-\frac12\sum_{e\in E(G)}H_e(z;\beta).
 \label{eq:global-sum}
\end{equation}
Since $e_\star$ uniquely maximizes $R_e$, the coefficient of $z^{D_\star}$ in
\eqref{eq:global-sum} is exactly $-a_{D_\star}/2$.  Thus either the single correlator or
the global objective contains the same count information.

\subsection{Coefficient recovery and proof of the main theorem}\label{sec:hardness}

Let $m=|E(G)|$.  Every Laurent exponent in \eqref{eq:path-sum} lies in $[-2m,2m]$.  Set
\begin{equation}
 Q=4m+1,
 \qquad
 \omega=e^{2\pi\ii/Q},
 \qquad
 \phi_j=\frac{2\pi j}{Q}
 \quad(0\leq j<Q).
 \label{eq:dft-points}
\end{equation}
For either $P(z)=H_{e_\star}(z;\beta)$ or $P(z)=F_2(G;z,\beta)$, exact oracle values at
$z=\omega^j$ recover every coefficient by the inverse discrete Fourier transform
\begin{equation}
 a_k(\beta)=[z^k]P(z)=\frac1Q\sum_{j=0}^{Q-1}P(\omega^j)\omega^{-jk}.
 \label{eq:idft}
\end{equation}
\sh{%
As the Laurent exponent interval \([-2m,2m]\) has width
\(4m<Q\), two distinct exponents in this interval are never congruent
modulo \(Q\).  Hence the inverse discrete Fourier transform %
does not suffer from aliasing and recovers the ordinary Laurent coefficients, including
those with negative indices.}

\begin{proof}[Proof of Theorem~\ref{thm:main}]
The $p=1$ statement is Proposition~\ref{prop:p1}.  For $p=2$, start from an instance $H$ of counting
maximum cuts. 
Set
\begin{equation}
 \cos\beta_1=\tfrac2{\sqrt5},
 \qquad
 \sin\beta_1=\tfrac1{\sqrt5},
\end{equation}
which satisfies $\tan \beta_1=1/2$. 
The coefficient \eqref{eq:edge-top-coefficient} is nonzero for every even $N$ and directly
determines $|\M(G)|=2|\M(H)|$.  
\sh{%
All coefficients are obtained from Eq.~\eqref{eq:idft}, and the largest exponent with nonzero coefficient is selected. Theorem \ref{thm:coefficient} guarantees that this exponent is $D_*$
and that its coefficient is nonzero.}
Thus exact evaluation is $\#$P-hard at QAOA depth two %
for either the global expectation or the designated correlator.

For any fixed $p>2$, set all angles in layers $3,\ldots,p$ to zero.  These layers are the
identity, so the depth-$p$ oracle value equals the depth-two value.  This proves the
$\#$P-hardness statement for every fixed $p\geq2$.
\end{proof}

\begin{remark}
    The reduction of Theorem~\ref{thm:main} recovers more than the count of maximum cuts: the extreme exponent $D_* = MC(G) + 2N - 4$ itself reveals $MC(G)$, and hence $MC(H)$ via Eq.~\eqref{eq:MCG}, before the coefficient is examined. Thus a single run of the interpolation recovers both the maximum cut value and the number of maximum cuts of H, so the present construction also reproduces the NP-hardness conclusions of \cite{WangEtAl2025} directly.
\end{remark}

\paragraph{Hardness on diameter-two graphs}

\begin{corollary}%
\label{cor:diameter-two-hardness}
For every fixed \(p \ge 2\), exact evaluation of the MaxCut QAOA
expectation is \(\#\mathrm{P}\)-hard under deterministic
polynomial-time Turing reductions, even when restricted to connected,
unweighted, simple graphs of diameter two. The same conclusion holds
for the expectation of the designated two-qubit observable \(Z_r Z_s\),
where \(\{r,s\}\) is an edge joining two universal vertices.
\end{corollary}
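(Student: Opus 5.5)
The plan is to observe that the reduction in the proof of Theorem~\ref{thm:main} already produces only graphs of the required form. The corollary should then follow without any new analysis. Given a source instance $H$ of counting maximum cuts, the reduction queries the oracle only on $G=\mathcal G(H)$. The queries use the angles \eqref{eq:base-restrictions}, with phases $\phi_j=2\pi j/Q$ and $Q=4m+1$, and the designated edge is $e_\star=\{r,s\}$. It therefore suffices to check that every such $G$ is connected, unweighted, simple, of diameter exactly two, and that $r,s$ are universal. Everything else (Lemmas~\ref{lem:rigidity}--\ref{lem:boundary}, Theorem~\ref{thm:coefficient}, and the inverse DFT \eqref{eq:idft}) is reused verbatim.

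The key steps, in order, are as follows.
\begin{enumerate}[label=(\roman*)]
\item Verify that $r$ and $s$ are universal in $G$: each is joined to every vertex of $W$ and to the other anchor, and $V(G)=W\cup\{r,s\}$. Hence both have degree $N-1$.
\item Deduce the diameter. Any two vertices $w,w'$ are either adjacent or joined by the path $w\,r\,w'$, so the diameter is at most $2$. It is exactly $2$ because $G$ is not complete; for instance, two vertices $a_{v,1},a_{v,2}\in A_v$ are non-adjacent. The same path argument gives connectedness.
\item Note that $G$ is simple and unweighted by construction. The only possible multi-edges would come from synchronous edges, but these join distinct parts $A_u,A_w$ with $u\neq w$ and never duplicate the $K_{L,L}$ edges between $A_v$ and $B_v$.
\item Invoke the proof of Theorem~\ref{thm:main}. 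This covers both the global objective, via \eqref{eq:global-sum} and the uniqueness of $e_\star$ as the maximizer of $R_e$, and the correlator $H_{e_\star}$. Both recover $|\M(H)|=|\M(G)|/2$ from $a_{D_\star}$ using polynomially many queries, each on a diameter-two instance.
\item Pad with $p-2$ identity layers, as in the main proof, to cover every fixed $p\geq 2$.
\end{enumerate}

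There is essentially no technical obstacle. The one point needing care is that the restricted problem must still be $\#$P-hard as a source: counting maximum cuts must be hard on the inputs $H$ we feed in. Since the reduction accepts an arbitrary $H$, this is immediate from \cite{JerrumSinclair1993}; the restriction falls only on the oracle's query graphs, not on $H$.

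A second minor point is whether ``diameter two'' should be read as exactly two. Step (ii) handles this, since $L\geq 5$ guarantees that $A_v$ contains non-adjacent pairs. If one also wants $H$ without isolated-vertex issues, nothing changes, because universality of $r,s$ does not depend on the structure of $H$.
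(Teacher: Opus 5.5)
Your proposal is correct and follows the same route as the paper. Like the paper, you note that every oracle query in the reduction of Theorem~\ref{thm:main} is made on $\mathcal G(H)$, bound the diameter by two using paths $u\,r\,v$ through the universal anchor, and rule out diameter one by exhibiting two non-adjacent vertices in a single part $A_v$; your explicit checks of simplicity, universality of $r,s$, and padding to $p>2$ fill in details the paper leaves implicit.
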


\begin{proof}%
The graph \(G(H)\) constructed in the proof of
Theorem~\ref{thm:main} is unweighted, simple, and connected. The two
anchor vertices \(r\) and \(s\) are adjacent to one another and to
every vertex in every variable gadget. 
Let \(u,v \in V(G(H))\). If neither vertex is \(r\), then both are
adjacent to \(r\), and hence there is a path \(u-r-v\) of length at
most two. If one of the vertices is \(r\), then the pair is adjacent. 
Thus every pair of distinct
vertices is joined by a path of length at most two, so
$ \operatorname{diam}(G(H)) \le 2$.
The graph is not complete: for example, two distinct vertices in the
same part \(A_v\) of a variable gadget \(K_{L,L}\) are nonadjacent.
Hence 
$\operatorname{diam}(G(H)) = 2$.

\end{proof}

\begin{remark}
The diameter-two restriction is obtained at the expense of large
degree that scales with the problem size\sh{, as it must: any diameter-two graph on $N$ vertices has
maximum degree at least $\sqrt{N-1}$} (because from any vertex, at most \(1+\Delta+\Delta(\Delta-1)
=\Delta^2+1\) vertices can lie within distance two).
Each anchor is adjacent to every gadget vertex, and the
vertices within each \(K_{L,L}\) gadget have degree growing with the
source-instance 
input problem size. Consequently, the corollary does not establish
hardness for bounded-degree graph families.
\end{remark}

\subsection{Hardness of exponentially precise additive approximation} \label{sec:expoApprox}
Recalling the setting of Theorem~\ref{thm:main} let 
\begin{eqnarray}
      F_G(\phi)
  :=
  \langle \psi_p(\phi) | C_G | \psi_p(\phi) \rangle,
\end{eqnarray}
where $|\psi_p(\phi)\rangle$ is the single-parameter depth-$p$ QAOA state with $\gamma_1=\gamma_2=\phi$, $\beta_1=\textrm{arctan}(1/2), \beta_2=\pi/4$, and all other angles fixed $\gamma_j=\beta_j=0$.

\begin{corollary}[Exponentially precise additive evaluation]\label{cor:precision}
Fix \(p\ge2\) and %
mixer angle choices of 
Theorem~\ref{thm:main}.  There exists a constant \(\alpha>0\) such that
the following problem is $\#$P-hard under deterministic
polynomial-time Turing reductions: given an \(N\)-vertex MaxCut
instance from the family constructed in the proof and an exactly
represented tied phase angle \(\phi\), output a number
\(\widetilde F_G(\phi)\) satisfying
\[
   \bigl|\widetilde F_G(\phi)-F_G(\phi)\bigr|
   \le 2^{-\alpha N}.
\]
The same statement holds for the designated correlator
\(\langle Z_rZ_s\rangle\).
\end{corollary}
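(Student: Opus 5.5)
We reduce from the exact reduction of Theorem~\ref{thm:main} and show that approximate oracle answers still determine $|\M(G)|$ once $\alpha$ is fixed. As in the exact case we query at the $Q=4m+1$ phases $\phi_j=2\pi j/Q$ and form the inverse DFT \eqref{eq:idft} at the single index $k=D_\star$. The difference is that we do not search for the largest nonzero exponent. Instead we compute $D_\star$ directly.

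Since $\mc(G)$ is unknown, we first handle it by looping over all candidate values $M\in\{0,\dots,m\}$. For each candidate we set $D=M+2N-4$. The exponent is then identified by the rule below, or alternatively by applying the same rounding argument to every candidate and keeping the largest $D$ whose rounded coefficient is nonzero.

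The error analysis is the central step. If each oracle answer has error at most $\epsilon=2^{-\alpha N}$, the triangle inequality shows that the recovered coefficient $\widetilde a_D=\frac1Q\sum_j\widetilde P(\omega^j)\omega^{-jD}$ satisfies $|\widetilde a_D-a_D|\le\epsilon$. Only this single multiplication by unit-modulus roots of unity is involved, so there is no amplification beyond a factor of $1$. For the global objective the coefficient is $-a_{D_\star}/2$; this changes only constants.

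From Theorem~\ref{thm:coefficient} with $c=2/\sqrt5$ and $s=1/\sqrt5$, the magnitude at the true top index is
\[
|a_{D_\star}| = 2^{1-N}|\M(G)|\,(cs)^{N/2}\,\bigl|c^N\pm s^N\bigr| \ge 2^{1-N}|\M(G)|\,(2/5)^{N/2}(c^N-s^N).
\]
Using $c^N-s^N\ge c^N/2$, this gives $|a_{D_\star}|\ge 2^{-\kappa N}|\M(G)|$, where $\kappa=1+\tfrac12\log_2(5/2)+\log_2(\sqrt5/2)+o(1)$ is an explicit constant.

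The normalizing factor $\Lambda_N:=2^{1-N}(cs)^{N/2}(\ii)^{N/2}[c^N+(-\ii s)^N]$ is a known algebraic number, and we can compute a rational approximation of it to $O(N)$ bits. We then output $\mathrm{round}(\mathrm{Re}(\widetilde a_{D}/(-\Lambda_N)))$. The rounding is correct whenever $\epsilon/|\Lambda_N|<1/4$, accounting also for the approximation error in $\Lambda_N$. It therefore suffices to take any constant $\alpha>\kappa+2$, for instance $\alpha=\kappa+3$; since $N\ge 2$, the absolute constants are absorbed.

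The part that needs the most care is the exponent identification. For candidates $D>D_\star$ the true coefficient is $0$ by Theorem~\ref{thm:coefficient}, so the rounded value is $0$. At $D=D_\star$ the rounded value is $|\M(G)|\ge 1$. Scanning $M$ downward from $m$ and stopping at the first nonzero rounded value therefore returns exactly $|\M(G)|=2|\M(H)|$, and also $\mc(G)$. Alternatively we can avoid scanning altogether by computing $\mc$ from the NP-hardness remark, but the scan needs only polynomially many DFTs and is simpler.

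Finally, the oracle outputs are rationals with $O(N)$ bits and the $\omega^{-jD}$ need only $O(N+\log Q)$ bits of precision. All of the arithmetic is therefore polynomial, and $b=\alpha N$ fits in unary. The correlator version is identical with $P=H_{e_\star}$ and the factor $-1/2$ dropped. Padding with identity layers handles every $p>2$.
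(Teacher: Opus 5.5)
Your proposal is correct and is essentially the paper's argument. Both proofs apply the inverse DFT to the approximate oracle values with per-coefficient error at most $\epsilon$, and use the same lower bound $(\sqrt2/5)^N$ on the normalizing factor (your $\kappa=\log_2 5-\tfrac12$ equals the paper's $\alpha_0=\log_2(5/\sqrt2)$). Both then locate $D_\star$ by a top-down scan, divide by the known factor and round. The differences are cosmetic: you scan only the candidate indices $M+2N-4$ and detect the top index by rounding the normalized coefficient, whereas the paper thresholds $|\widehat a_k|$ at $|\kappa_N|$ using $|\M(H)|\ge 2$. Despite your opening remark, your scan is exactly a search for the largest non-negligible exponent.
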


\begin{proof}
Write
\[
   F_G(z)=\sum_{k=-D}^{D}a_kz^k,
   \qquad z=e^{i\phi}.
\]
Evaluate the approximation oracle at \(M>2D\) distinct \(M\)-th
roots of unity and apply the normalized inverse discrete Fourier
transform.  If every oracle value has additive error at most
\(\varepsilon\), then every recovered Laurent coefficient has
additive error at most \(\varepsilon\).  %
\sh{
By Eq.~\eqref{eq:global-sum} and Theorem~\ref{thm:coefficient}, %
for the coefficient of $z^{D^\star}$ in \(F_G(z)\) we may write}
\[
a_{D^\star} = -\tfrac12\kappa_N |\M(G)| = -\kappa_N |\M(H)|. 
\]

For the mixer angle choice $\tan\beta=1/2$, write
$c_\beta=\cos\beta=\frac{2}{\sqrt{5}}$
and 
$s_\beta=\sin\beta=\frac{1}{\sqrt{5}}$.
Then %
$|\kappa_N|$ is bounded
from below (cf. Eq.~\eqref{eq:edge-top-coefficient}) by
\begin{align}
2^{1-N}(c_\beta s_\beta)^{N/2}
 \left|c_\beta^N+(-is_\beta)^N\right|
\nonumber
\,\geq&\,\,
2^{1-N}\left(\tfrac{2}{5}\right)^{N/2}
 \left(c_\beta^N-s_\beta^N\right)
\nonumber\\
\,=&\,\,
2^{1-N}\left(\tfrac{2}{5}\right)^{N/2}
 \tfrac{2^N-1}{5^{N/2}}
\nonumber\\
\,\geq&\,\,
\left(\tfrac{\sqrt{2}}{5}\right)^N\,
=
2^{-N\log_2(5/\sqrt{2})}.
\end{align}
Here we used the %
triangle inequality and
$2^N-1\geq 2^{N-1}$.  
Hence $|\kappa_N|\ge 2^{-\alpha_0 N}$ for %
$\alpha_0:=\log_2(5/\sqrt2)$.  
Next, let $\alpha = \alpha_0+1$. 
If every oracle expectation value is returned with absolute error at most~$\varepsilon\leq 2^{-\alpha N}\leq |\kappa_N|/2$, then the normalized inverse Fourier transform recovers every
Laurent coefficient with complex-modulus error at most~$\varepsilon$.  
Let $\widehat a_k$ denote the recovered coefficients. For
$k>D_\star$ the true coefficient vanishes, so
$|\widehat a_k|\le\varepsilon<|\kappa_N|$. Using 
$|\#\mathrm{MC}(H)|\ge2$ %
(global spin reversal preserves the cut value) 
we have 
$$|\widehat a_{D_\star}|\ge|\kappa_N|\,|\#\mathrm{MC}(H)| -\varepsilon\ge\tfrac32|\kappa_N|>|\kappa_N|.$$ 
Hence $D_\star$ is
the largest exponent whose recovered coefficient exceeds
$|\kappa_N|$ in modulus, a threshold computable from $N$ and
$\beta$. 
Dividing $\widehat a_{D_\star}$ by the known
nonzero factor $-\kappa_N$ yields $|\#\mathrm{MC}(H)|$ to additive error
$\varepsilon/|\kappa_N|\le 2^{-N}<1/2$, so rounding to the
unique nearest integer recovers the counts $|\#\mathrm{MC}(G)|$ and $|\#\mathrm{MC}(H)|$ exactly.

\sh{%
The same arguments apply for the %
single correlator
\(\langle Z_rZ_s\rangle\). In that case %
Theorem~\ref{thm:coefficient} gives the coefficient directly without the %
\(-\tfrac12\) factor from Eq.~\eqref{eq:global-sum}. 
 Thus the same
threshold and rounding argument recovers $|\#\mathrm{MC}(G)|$ and $|\#\mathrm{MC}(H)|$ directly; only
the known multiplicative factor relating the extreme coefficient to
the count is changed.
}

\end{proof}

\section{Hardness of gradients and Hessian}
\label{sec:derivative-hardness}

Consider the setting of Theorem~\ref{thm:main} where we restrict to the
tied-phase line
\[
  F_G(\phi)
  :=
  \left\langle C_G\right\rangle_{\gamma_1=\gamma_2=\phi}.
\]
The same statements hold with $C_G$ replaced by the designated observable
$Z_rZ_s$.  As above,
\begin{equation}
  F_G(\phi)=\sum_{k=-D}^{D} a_ke^{ik\phi},
  \label{eq:compact-derivative-laurent}
\end{equation}
where $D\leq 2|E(G)|=\operatorname{poly}(|G|)$.  For the graph $G(H)$ produced by the
reduction %
we again use that in either case 
by Theorem~\ref{thm:coefficient} there is %
a $D_\star>0$ such that
  $a_{D_\star}(G)
  =
  \kappa(H)\,|\M(H)|$, with 
  $\kappa(H)\neq 0$ computable from \(N\) and the chosen
observable. \sh{The exponent \(D_\star\) itself need not be known in
advance, since it contains \(\mathrm{MC}(H)\).}

\begin{theorem}[Hardness of phase derivatives]
\label{thm:compact-derivative-hardness}
For every fixed $p\ge2$ and every fixed integer $r\ge1$, exact computation
of
\[
  F_G^{(r)}(\phi)=\frac{d^r}{d\phi^r}F_G(\phi)
\]
is $\#\mathrm{P}$-hard under deterministic polynomial-time Turing
reductions.  This holds for unweighted simple connected MaxCut instances 
for the parameter choices of  Theorem~\ref{thm:main}, and for both the
total MaxCut expectation and the correlator $\langle Z_rZ_s\rangle$.
\end{theorem}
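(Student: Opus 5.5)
We reduce from the exact evaluation problem of Theorem~\ref{thm:main}, or more directly from the coefficient-extraction step in its proof. By \eqref{eq:compact-derivative-laurent}, differentiating termwise gives
\[
  F_G^{(r)}(\phi)=\sum_{k=-D}^{D} (\ii k)^r a_k e^{\ii k\phi},
\]
so the $r$-th derivative is again a Laurent polynomial in $z=e^{\ii\phi}$, with coefficients $b_k=(\ii k)^r a_k$. The degree bound is the same, $|k|\le D\le 2|E(G)|$.

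The plan has four steps.

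\emph{Step 1: oracle queries.} We query the derivative oracle at the $Q=4m+1$ roots of unity $\phi_j=2\pi j/Q$ of \eqref{eq:dft-points}, using the fixed mixer angles $\beta_1=\arctan\frac12$, $\beta_2=\pi/4$ and identity padding layers for $p>2$. These are exactly the query points used for Theorem~\ref{thm:main}.

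\emph{Step 2: coefficient recovery.} The inverse DFT \eqref{eq:idft} then recovers every $b_k$ exactly, without aliasing, since the exponent window $[-2m,2m]$ has width $4m<Q$. All arithmetic stays in the same number field $\mathbb K_Q$, and the factor $(\ii k)^r$ only involves a Gaussian integer of polynomial bit length, because $r$ is fixed.

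\emph{Step 3: locating the top exponent.} Since $D_\star>0$, we have $k^r\neq 0$ at $k=D_\star$. Hence $b_k\neq0$ if and only if $a_k\neq 0$ for every $k\neq 0$. The only exponent where information is lost is $k=0$, which is harmless because $D_\star=\mc(G)+2N-4>0$. So by Theorem~\ref{thm:coefficient} and \eqref{eq:global-sum}, the largest $k$ with $b_k\neq 0$ is $D_\star$, which is read off from the data without knowing $\mc(H)$ in advance.

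\emph{Step 4: extracting the count.} We compute
\[
  a_{D_\star}=b_{D_\star}/(\ii D_\star)^r,
\]
then divide by the known nonzero factor $\kappa(H)$, which depends only on $N$, $\beta$, and the chosen observable ($C_G$ or $Z_rZ_s$). This yields $|\M(H)|$.

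Every step runs in polynomial time given oracle access, which gives a deterministic polynomial-time Turing reduction from counting maximum cuts, and hence $\#$P-hardness for every fixed $p\ge2$ and $r\ge1$. The statements about the correlator, the restricted angles, and the unweighted simple connected instances carry over verbatim, since the instance family $G(H)$ and the queries are unchanged.

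The main obstacle is checking that differentiation destroys no information needed to identify $D_\star$ and its coefficient. That is, we must rule out that multiplying by $k^r$ kills the top coefficient, and make sure that the top nonzero exponent of $F_G^{(r)}$ coincides with that of $F_G$. Both follow from $D_\star\ge 1$, together with the fact that only $k=0$ is annihilated. A secondary point is representation: the oracle's derivative output lies in the same field $\mathbb K_Q$, because it is a $\mathbb Z[\ii]$-linear combination of the $a_k\omega^{jk}$, so the exact inverse-DFT argument from the proof of Theorem~\ref{thm:main} applies without change.
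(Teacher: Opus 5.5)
Your proposal is correct and follows essentially the same route as the paper's proof. Both differentiate the Laurent expansion termwise to get coefficients $(\ii k)^r a_k$, recover these by an inverse DFT at roots of unity, use $D_\star>0$ and $a_{D_\star}\neq0$ to identify $D_\star$ as the top nonzero exponent, and then divide by $(\ii D_\star)^r$ and the known factor $\kappa(H)$ to recover $|\M(H)|$; the only cosmetic difference is that you fix the number of sample points at $Q=4m+1$, where the paper uses a generic $M>2D$.
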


To the best of our knowledge we are not aware of an
earlier \(\#\mathrm P\)-hardness result for exact derivatives of the
standard fixed-depth QAOA for MaxCut objective.

\begin{proof} %
Differentiating \eqref{eq:compact-derivative-laurent} gives
\[
  F_G^{(r)}(\phi)
  =
  \sum_{k=-D}^{D}(ik)^r a_ke^{ik\phi}.
\]
Choose $M>2D$, let $\omega=e^{2\pi i/M}$, and set
$\phi_j=2\pi j/M$.  Orthogonality of the $M$th roots of unity yields
\[
  \frac1M\sum_{j=0}^{M-1}
  F_G^{(r)}(\phi_j)\omega^{-jk}
  =
  (ik)^r a_k.
\]
Since $D_\star>0$ and \(a_{D_\star}\ne0\),  the derivative
coefficient $ (iD_\star)^r a_{D_\star}$ is nonzero. Consequently, the largest exponent with nonzero recovered
derivative coefficient is again \(D_\star\); no oracle for \(F_G\)
itself is required. 
The extreme coefficient is recovered exactly as
\[
  a_{D_\star}(G)
  =
  \frac{1}{(iD_\star)^rM}
  \sum_{j=0}^{M-1}
  F_G^{(r)}(\phi_j)\omega^{-jD_\star},
\]
where the value $D_*$ is recovered as in the proof of Thm.~\ref{thm:main}. 
Applying this identity to $G(H)$ and using
Theorem~\ref{thm:coefficient}
recovers
$|\M(H)|$ in polynomial time.
\end{proof}

\begin{corollary}[Hardness of the full gradient and Hessian]
\label{cor:compact-gradient-hessian}
Consider the depth-two QAOA for MaxCut objective
 $ \mathcal F_G(\gamma_1,\gamma_2,\beta_1,\beta_2)
  :=\langle C_G\rangle$. 
 Exact computation of the full gradient
$\nabla\mathcal F_G$ and of the full Hessian $\nabla^2\mathcal F_G$ is
$\#\mathrm{P}$-hard under deterministic polynomial-time Turing reductions.
The same conclusion holds for single correlators.
\end{corollary}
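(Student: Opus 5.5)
The plan is to reduce directly to Theorem~\ref{thm:compact-derivative-hardness} with $r=1$ for the gradient and $r=2$ for the Hessian. The work is to express the tied-phase derivatives $F_G'(\phi)$ and $F_G''(\phi)$ through entries of $\nabla\mathcal F_G$ and $\nabla^2\mathcal F_G$ evaluated at the restricted angles of Eq.~\eqref{eq:base-restrictions}. Padding with identity layers then handles every fixed $p\ge2$, exactly as in the proof of Theorem~\ref{thm:main}. Since $F_G(\phi)=\mathcal F_G(\phi,\phi,\beta_1,\pi/4)$ with $\tan\beta_1=1/2$, the chain rule gives
\[
F_G'(\phi)=\bigl(\partial_{\gamma_1}\mathcal F_G+\partial_{\gamma_2}\mathcal F_G\bigr)\big|_{(\phi,\phi,\beta_1,\pi/4)}.
\]
This is a fixed linear combination of two gradient entries. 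Likewise,
\[
F_G''(\phi)=\bigl(\partial^2_{\gamma_1}+2\,\partial_{\gamma_1}\partial_{\gamma_2}+\partial^2_{\gamma_2}\bigr)\mathcal F_G\big|_{(\phi,\phi,\beta_1,\pi/4)},
\]
which is a fixed combination of three Hessian entries.

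The reduction therefore proceeds as follows. Query the gradient (respectively Hessian) oracle at the $M>2D$ points $\phi_j=2\pi j/M$, encoded as in the footnote to Theorem~\ref{thm:main}. Form the combination above exactly in the number field $\mathbb K_Q$. This simulates an exact oracle for $F_G^{(1)}$ (respectively $F_G^{(2)}$) with only constant overhead per query. From there, apply the inverse DFT argument of Theorem~\ref{thm:compact-derivative-hardness} unchanged. The largest exponent with nonzero recovered coefficient is $D_\star$, because $(iD_\star)^r a_{D_\star}\ne0$ when $D_\star>0$. Dividing by $(iD_\star)^r\kappa(H)$ then yields $|\M(H)|$.

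For single correlators the same chain-rule identities hold with $C_G$ replaced by $Z_rZ_s$. Theorem~\ref{thm:compact-derivative-hardness} already covers $\langle Z_rZ_s\rangle$, so nothing new is needed there.

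The only point requiring care is that the oracle is queried at a point where $\gamma_1=\gamma_2$ but the derivatives are partial in each $\gamma_i$ separately. I must make sure the chain-rule identity is applied to the untied function $\mathcal F_G$, which is a trigonometric polynomial in all four angles and hence smooth, so the identity is legitimate. The restriction to the line is imposed only at evaluation time. A secondary check is that the oracle outputs, being derivatives of trigonometric polynomials with algebraic coefficients, still lie in the same polynomial-degree number field, so exact arithmetic remains polynomial time. This holds because differentiation in $\gamma_i$ only multiplies Laurent terms by integers, and differentiation in $\beta$ swaps $c$ and $s$ up to sign, which I will need only to note and not to use.
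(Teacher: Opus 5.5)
Your proposal is correct and follows essentially the same route as the paper. The paper writes your two chain-rule identities compactly as $F_G'(\phi)=v_\gamma^{\mathsf T}\nabla\mathcal F_G(\theta(\phi))$ and $F_G''(\phi)=v_\gamma^{\mathsf T}\nabla^2\mathcal F_G(\theta(\phi))v_\gamma$, with $v_\gamma=(1,1,0,0)^{\mathsf T}$ on the line $\theta(\phi)=(\phi,\phi,\beta_1,\beta_2)$, and then invokes Theorem~\ref{thm:compact-derivative-hardness} with $r=1$ and $r=2$; your extra remarks on differentiating the untied trigonometric polynomial, exact number-field arithmetic, and padding to $p>2$ are correct details that the paper leaves implicit.
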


\begin{proof}[Proof of Corollary \ref{cor:compact-gradient-hessian}]
On the affine line
$\theta(\phi)=(\phi,\phi,\beta_1,\beta_2)$, the chain rule gives
\[
  F_G'(\phi)
  =
  v_\gamma^{\mathsf T}\nabla\mathcal F_G(\theta(\phi)),
  \qquad
  F_G''(\phi)
  =
  v_\gamma^{\mathsf T}\nabla^2\mathcal F_G(\theta(\phi))v_\gamma,
\]
for $v_\gamma=(1,1,0,0)^T$. 
The conclusion follows from
Theorem~\ref{thm:compact-derivative-hardness} with $r=1$ and $r=2$.
\end{proof}

\begin{remark}
The result establishes hardness of the tied-phase directional derivative
and curvature, and hence of returning the full gradient or Hessian.  It
does not by itself imply hardness of each individual gradient coordinate
or Hessian entry separately.
\end{remark}

\begin{corollary}[Exponentially precise derivative approximation]
\label{cor:approximate-derivative-hardness}
For %
the mixer angles of Theorem~\ref{thm:main} and every fixed $r\ge1$,
there exists a constant $c>0$ such that approximating
$F_G^{(r)}(\phi)$ to additive error at most $2^{-cN}$, where
$N=|V(G)|$, is $\#\mathrm{P}$-hard under deterministic polynomial-time
Turing reductions.  In particular, this applies to the tied-phase first
and second derivatives.
\end{corollary}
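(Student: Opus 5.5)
The plan is to combine the exact derivative argument of Theorem~\ref{thm:compact-derivative-hardness} with the error-propagation argument of Corollary~\ref{cor:precision}. Fix $r\ge1$ and build $G=\mathcal G(H)$ with the mixer angles of Theorem~\ref{thm:main}. Choose $M=4m+1>2D$ with $D\le 2m$. Query the approximate derivative oracle at $\phi_j=2\pi j/M$ and form
\[
\widehat b_k=\frac1M\sum_{j=0}^{M-1}\widetilde F^{(r)}_G(\phi_j)\,\omega^{-jk}.
\]
If every oracle answer has error at most $\varepsilon$, then the normalized inverse DFT is an average of $M$ terms of modulus at most $\varepsilon$. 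Hence $|\widehat b_k-(\ii k)^r a_k|\le\varepsilon$ for every $k\in[-D,D]$, and there is no aliasing, exactly as in Section~\ref{sec:hardness}.

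Next I need a lower bound on the signal and a threshold that does not depend on the unknown $D_\star$. By Theorem~\ref{thm:coefficient} and Eq.~\eqref{eq:global-sum}, $a_{D_\star}=-\kappa_N|\M(H)|$ with $|\kappa_N|\ge 2^{-\alpha_0N}$, as computed in the proof of Corollary~\ref{cor:precision}. Since $D_\star=\mc(G)+2N-4\ge 1$, we have $|(\ii D_\star)^r a_{D_\star}|\ge|a_{D_\star}|\ge 2|\kappa_N|$, using $|\M(H)|\ge2$. For $k>D_\star$ the true derivative coefficient is $0$. So, as before, if $\varepsilon\le|\kappa_N|/2$, then $D_\star$ is the largest $k$ with $|\widehat b_k|>|\kappa_N|$. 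This threshold is computable from $N$ and $\beta$, and it identifies $D_\star$ exactly. In particular it also yields $\mc(H)$.

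To recover the count, compute $\widehat b_{D_\star}/\bigl((\ii D_\star)^r(-\kappa_N)\bigr)$. Its error from $|\M(H)|$ is at most $\varepsilon/(D_\star^r|\kappa_N|)\le\varepsilon/|\kappa_N|$. Taking $\varepsilon\le 2^{-\alpha N}$ with $\alpha=\alpha_0+1$ makes this at most $2^{-N}<1/2$, so rounding gives $|\M(H)|$ exactly. The same constant works for every $r$, because the factor $D_\star^r\ge1$ only helps, so $c=\alpha_0+1$ suffices. For the correlator $\langle Z_rZ_s\rangle$ the argument is unchanged, except the known factor loses the $-\tfrac12$ from Eq.~\eqref{eq:global-sum}.

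The first and second tied-phase derivatives are the cases $r=1,2$. Through the chain rule identity in the proof of Corollary~\ref{cor:compact-gradient-hessian}, an approximate gradient or Hessian gives these directional derivatives. The error grows by at most a constant factor ($\|v_\gamma\|^2=2$), which a slightly larger $c$ absorbs.

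The main obstacle is making sure the threshold test is not ruined by the growth factor $k^r$. For $k<D_\star$, the coefficients $(\ii k)^ra_k$ may be large, but that is harmless because we only look for the \emph{largest} index above threshold. Above $D_\star$ the true values vanish identically, so only the oracle noise enters there. I would also check that the exact-arithmetic representation of $\omega$ and $\kappa_N$ causes no issues. Because the inputs are approximate rationals, the computation can be done in floating point with $O(N+\log M)$ bits, which keeps the reduction polynomial time.
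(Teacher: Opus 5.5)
Your proposal is correct and follows essentially the same route as the paper. Both arguments take the inverse DFT of the derivative oracle values, so that each recovered $(\ii k)^r a_k$ has error at most $\varepsilon$. Both then locate $D_\star$ with the threshold test of Corollary~\ref{cor:precision}, divide by $(\ii D_\star)^r$ times the known factor, and round. Your observation that $D_\star\ge1$ lets $c=\alpha_0+1$ work uniformly in $r$, and your explicit check that the threshold test needs only $\varepsilon\le|\kappa_N|/2$, make the argument slightly cleaner than the paper's appeal to $D_\star=\operatorname{poly}(N)$.
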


\begin{proof}%

The Fourier transform of the derivative oracle %
yields approximations to
\((ik)^r a_k\). As in Corollary~\ref{cor:precision}, for every \(k>D_\star\) the true
coefficient vanishes, whereas the coefficient at \(k=D_\star\) exceeds a
known computable threshold in modulus. Scanning the recovered
coefficients from the largest exponent downward therefore identifies
\(D_\star\).

Suppose that each derivative query is returned with additive error at most
$\varepsilon$.  The normalized inverse Fourier transform shows that the
recovered value of $(ik)^ra_k$ has additive error at most
$\varepsilon$.  Consequently, the recovered value of $a_{D_\star}(G)$ has
error at most
$\varepsilon/D_\star^r$. 
On the constructed family,
\[
  a_{D_\star}(G(H))
  =
  \kappa(H)\,|\M(H)|,
\]
with $|\kappa(H)|=2^{-O(N)}$ as shown. 
It is therefore enough to require
\[
  \frac{\varepsilon}{D_\star^r}
  <
  \frac{|\kappa(H)|}{2}.
\]
Because $D_\star=\operatorname{poly}(N)$ and $r$ is fixed, this inequality
holds for $\varepsilon\le2^{-cN}$ with a suitable constant~$c>0$.
Division by the known factor $\kappa(H)$ followed by rounding recovers the
exact count.
\end{proof}

\section{Discussion} \label{sec:discussion}

Our main theorem gives a clean exact-complexity classification by depth for the standard QAOA for MaxCut
expectation values on unrestricted graphs: $p=1$ is polynomial-time computable, whereas every
fixed $p\geq2$ is already $\#$P-hard.  Our reduction uses high-degree anchors, consistent
with the tractability of fixed-depth cost expectation value evaluation on bounded-degree or bounded-local-treewidth
graph families~\cite{WangEtAl2025}.  
\sh{This degree dependence differs sharply from
that of the full-distribution sampling problem: for general
$2$-local cost %
Hamiltonians, \={A}boli\c{n}\v{s} and
Ambainis~\cite{abolicnvs2026sharp} recently showed that depth-one QAOA
sampling is already classically hard 
for multiplicative-error sampling 
at interaction (maximum vertex) degree three,
whereas degree-two %
instances are exactly classically sampleable in
polynomial time for $p=O(\log n)$}. 
Thus the relevant degree
threshold depends essentially on whether the classical task is
evaluation of a local or global expectation value or sampling from
the complete QAOA output distribution. 
Our reduction also concerns exact or exponentially precise algebraic evaluation; no claims are
made about computations outputting constant or polynomially small error.

Our main result is worst-case and %
does not imply hardness for all or for typical graphs (MaxCut problem instances). 
An interesting future direction is a more precise characterization of which graph subfamilies our hardness obstruction applies to or not. 
As mentioned, analytic or numerical solutions are possible in a number of cases. For example, a sequence of papers has obtained successively improved results concerning QAOA for MaxCut on high-girth graphs~\cite{Wang2018,hastings2019classical,wurtz2021maxcut,marwaha2021local,barak2021classical,basso2021quantum,sureshbabu2024parameter,farhi2025lower}, where our present reduction %
does not apply. 
\sh{A particularly important open direction is to determine %
to what degree computational hardness of QAOA expectation values persists
at physically relevant levels of precision, in particular for constant or inverse-polynomial
additive error, as well as under physical noise~\cite{marshall2020characterizing}. Resolving this question would help clarify the extent to which the
exact-complexity transition at $p=2$ constrains efficient quantum or classical %
evaluation 
of QAOA expectation values, both theoretically and in practice. Similar considerations apply to related quantities such as their gradients and Hessian important for parameter training.}

\paragraph{Implications for algorithms based on QAOA expectation values}
Recently a number of hybrid approaches have been proposed based on classical processing of quantum expectation values obtained using a quantum ansatz such as QAOA. 
Examples include so-called %
Iterative %
Quantum Algorithms%
~\cite{bravyi2020obstacles,bravyi2020hybrid,dupont2023quantum,MIS,Finzgar2023,brady2025quantum,wybo2026scalable}, where %
correlators $\langle Z_iZ_j\rangle$ %
are iteratively estimated and processed to obtain a sequence of reduced problem instances, %
and related relaxation-based approaches~\cite{dupont2024extending,maciejewski2024multilevel}, where the 
expectation values are classically processed to  produce candidate solutions directly. 
Strengthening the implications of~\cite{WangEtAl2025}, Theorem~\ref{thm:main} shows that the expectation values %
appearing in these and other related approaches can be computationally hard to evaluate in the worst case, at lease if exponential precision is required;  
we point the reader to \cite[Sec. III.D]{wybo2026scalable} for additional relevant discussion. %
Our present %
results neither establish nor rule out %
potential quantum
advantage for these algorithms.

\paragraph{Connections to quantum supremacy experiments}
Our main result provides an %
expectation value analogue of existing sampling-hardness
results for low-depth QAOA  circuits~\cite{farhi2016quantum,harrow2017quantum,dalzell2020many}. %
Already at QAOA depth $p=2$, we show %
classical evaluation of a restricted QAOA for MaxCut expectation value %
suffices to recover
$\#\mathrm{P}$-hard counting information.
However, this does not by itself constitute an experimentally accessible
quantum-advantage result.  In particular, by Hoeffding's inequality estimating a bounded
observable to additive error $\varepsilon$ using independent circuit
measurements requires
$O(\varepsilon^{-2}\log\frac{1}{\delta})$
independent samples to achieve failure probability at most $\delta$.  
\sh{For the global MaxCut observable, a
computational basis measurement instead returns a cut value in
$[0,|E|]$.  Directly estimating the unnormalized
expectation $\langle C_G\rangle$ to additive error $\varepsilon$
therefore requires $O(
    m^2\varepsilon^{-2}\log\frac{1}{\delta}
  )$
samples.} 
The
accuracy $\varepsilon=2^{-\alpha N}$ used in our reduction %
hence entails exponentially many direct measurement samples; the additional
$m^2=\textrm{poly}(N)$ factor for the global objective does not change this
conclusion. 
Our result should consequently be interpreted as a structural
worst-case statement about %
QAOA
expectation value landscapes, rather than as %
one concerning 
experimentally accessible precision. %

\section*{Acknowledgment}
\sh{%
The author is grateful to Lucas Braydwood and Phillip Lotshaw for helpful discussions regarding the results of Ref.~\cite{WangEtAl2025}.} 
This material is based upon
work supported by the U.S. Department of Energy, Office of Science, Office of Advanced Scientific Computing Research under Award Number 89243024SSC000129, %
and under %
Prime Contract No. 80ARC020D0010 with the
NASA Ames Research Center. 
\sh{The author acknowledges the use of ChatGPT-5.6-Sol AI tools as part of the drafting of Figure~\ref{fig:triangle-gadget} and for verification of the proofs of this manuscript. %
The author assumes complete responsibility for all reported findings.} %

\bibliographystyle{unsrt}
\nocite{lenstra1982factoring,basu2006algorithms,cohen2013course} %added for footnote citation
\bibliography{bib}

\appendix

\section{Depth one QAOA expectation values are exactly tractable}\label{sec:p1}

For an edge $e=\{u,v\}$, let $d_u,d_v$ be the endpoint degrees and
$\lambda_{uv}=|N(u)\cap N(v)|$.  
For each single edge term $C_{uv}=(I-Z_uZ_v)/2$ 
the standard depth-one QAOA  calculation~\cite{Wang2018,hadfield2018quantum} gives
\begin{align}
 \langle C_{uv}\rangle_{p=1}
 ={}&\frac12
 +\frac14\sin(4\beta)\sin\gamma
 \left[(\cos\gamma)^{d_u-1}+(\cos\gamma)^{d_v-1}\right]
 \notag\\
 &-\frac14\sin^2(2\beta)
 (\cos\gamma)^{d_u+d_v-2\lambda_{uv}-2}
 \left[1-(\cos2\gamma)^{\lambda_{uv}}\right],
 \label{eq:p1-formula}
\end{align}
with  $F_1(G;\boldsymbol\gamma,\boldsymbol\beta)
 = \sum_{\{u,v\}\in E} \langle C_{uv}\rangle_{p=1}$.
This formula follows by conjugating $Z_uZ_v$ by the mixer and phase operators, and retaining only Pauli
strings with nonzero expectation values for the initial state $|+\rangle^{\otimes n}$. %

\begin{proposition}\label{prop:p1}
$\Eval^{uv}_1,\,\Eval^{\mathrm{MC}}_1\in\FP$.
\end{proposition}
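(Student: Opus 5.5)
The plan is to turn the closed-form formula \eqref{eq:p1-formula} into an exact polynomial-time algorithm. The proof has three steps: justify the formula, show each term can be computed exactly and efficiently, and sum over edges.

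\emph{Step 1: justify the formula.} I would work in the Heisenberg picture. Write
\[
\langle Z_uZ_v\rangle = \langle +|e^{\ii\gamma C_G}e^{\ii\beta B}Z_uZ_ve^{-\ii\beta B}e^{-\ii\gamma C_G}|+\rangle .
\]
Conjugating by the mixer gives
\[
Z_w \mapsto \cos(2\beta)Z_w+\sin(2\beta)Y_w ,
\]
so $Z_uZ_v$ becomes a sum of four Pauli strings supported on $\{u,v\}$. Conjugation by $e^{-\ii\gamma C_G}$ commutes with the $Z$ factors. A factor $Y_u$ (or $X_u$) instead picks up $\prod_{w\in N(u)}(\cos\gamma\, I \pm \ii\sin\gamma\, Z_uZ_w)$-type products, since $e^{\ii\gamma Z_uZ_w/2}$ anticommutation only matters for terms touching $u$; the $I/2$ parts of $C_G$ are global phases.

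Next, take the expectation in $|+\rangle^{\otimes n}$. Only strings consisting solely of $X$'s and identities survive. For the $YZ$ and $ZY$ terms, this forces the $Z$ on the other endpoint to be cancelled by the edge factor $Z_uZ_v$, which gives $\sin\gamma(\cos\gamma)^{d_u-1}$. For the $YY$ term, the $Z_w$ contributions from common neighbors $w$ must cancel in pairs; this produces $(\cos\gamma)^{d_u+d_v-2-2\lambda}$ times a binomial sum that collapses to $1-(\cos 2\gamma)^{\lambda}$. The $ZZ$ term has zero expectation. This is exactly the derivation in \cite{Wang2018,hadfield2018quantum}, so I would cite it and only sketch the Pauli bookkeeping.

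\emph{Step 2: the formula computes exactly.} The quantities $d_u$, $d_v$, $\lambda_{uv}$ are computable in $O(n)$ time per edge. The angles are given exactly as algebraic sines and cosines; $\sin 4\beta$, $\sin^2 2\beta$ and $\cos 2\gamma$ are polynomials in these. Powers up to exponent $\le 2n$ take $O(\log n)$ multiplications by repeated squaring, inside the fixed number field generated by the inputs. The degree of that field is bounded by the input, and coefficient bit sizes grow only polynomially, since the exponents are at most linear in $n$. So each edge term is an exact algebraic number computed in polynomial time. $\Eval^{uv}_1$ follows immediately via $\langle Z_uZ_v\rangle = 1-2\langle C_{uv}\rangle$.

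\emph{Step 3: sum over edges.} Summing over at most $n^2$ edges gives $\Eval^{\mathrm{MC}}_1$.

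The main obstacle I expect is the $YY$ common-neighbor sum in Step 1: verifying that triangle contributions yield exactly $1-(\cos2\gamma)^{\lambda}$, with the correct sign. I would check this by factoring over common neighbors, since each $w$ contributes a local factor $\cos^2\gamma \pm \sin^2\gamma$. Bit-size control in Step 2 is routine.
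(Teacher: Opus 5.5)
Your proposal is correct and takes the same route as the paper: compute all degrees and common-neighbor counts, evaluate the closed-form formula \eqref{eq:p1-formula} exactly for each edge over the field generated by the angle data, and sum over edges. The extra pieces in your version only make explicit what the paper leaves implicit. These are the Heisenberg-picture sketch of the formula (your signs and exponents for the $YZ$, $ZY$ and $YY$ terms check out) and the identity $\langle Z_uZ_v\rangle = 1-2\langle C_{uv}\rangle$ for $\Eval^{uv}_1$.
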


\begin{proof}
Compute all degrees and common-neighbor counts, evaluate \eqref{eq:p1-formula} for every
edge, and sum, which requires a number of operations polynomially scaling in the problem size.  Exact exponentiation and arithmetic over the algebraic angle field require
polynomial bit complexity. 
\end{proof}

Our results above show that, under standard complexity-theoretic assumptions, no analogous exact efficiently computable formulas are possible for arbitrary graphs with $p\geq 2$ in general.

\end{document}